\documentclass[reqno,12pt]{amsart}
\usepackage{fullpage}
\usepackage{amsfonts}
\usepackage{amssymb}
\usepackage{tikz}

\numberwithin{equation}{section}
\def\Lagr{\mathcal{L}}
\def\Ham{\mathcal{H}}
\def\solnsp{{\mathcal E}}
\def\Dt{\mathcal{D}_t}

\def\dddot#1{\overset{...}{#1}}

\def\X{{\bf X}}
\def\Y{{\bf Y}}
\def\pr{{\rm pr}}

\def\com/{constant of motion}
\def\coms/{constants of motion}
\def\eom/{equations of motion}
\def\LRL/{Laplace-Runge-Lenz}

\def\Rnum{\mathbb{R}}

\def\t{{\rm t}}
\def\det{\mathrm{det}}

\def\bigint{\displaystyle\int}
\def\parder#1#2{\frac{\partial{#1}}{\partial{#2}}}

\def\parders#1#2#3{\frac{\partial^2{#1}}{\partial{#2}\partial{#3}}}
\def\totder#1#2{\frac{d{#1}}{d{#2}}}

\newtheorem{prop}{Proposition}
\newtheorem{thm}{Theorem}
\newtheorem{cor}{Corollary}
\newtheorem{lem}{Lemma}
\newtheorem{defn}{Definition}

\def\Ref#1{Ref.\cite{#1}}

\begin{document}

\title{A hybrid Lagrangian-Hamiltonian framework\\ and its application to conserved integrals\\ and symmetry groups}

\author{
Stephen C. Anco
\\\lowercase{\scshape{
Department of Mathematics and Statistics, 
Brock University\\
St. Catharines, ON, Canada}} \\
}

\begin{abstract}
A hybrid framework is developed that highlights and unifies
the most important aspects of the Noether correspondence between
symmetries and conserved integrals in Lagrangian and Hamiltonian mechanics.
Several main results are shown:
(1) a modern form of Noether's theorem is presented that
uses only the equations of motion, with no knowledge required of an explicit Lagrangian;
(2) the Poisson bracket is formulated with Lagrangian variables
and used to express the action of symmetries on conserved integrals;
(3) features of point symmetries versus dynamical symmetries are clarified and explained;
(4) both autonomous and non-autonomous systems are treated on an equal footing.
These results are applied to dynamical systems that are locally Liouville integrable.
In particular, they allow finding the complete Noether symmetry group of such systems. 
\end{abstract}

\maketitle

\section{Introduction}\label{sec:intro}
\label{intro}

Classical mechanics is a beautiful mathematical area of study
which has several formulations --- Lagrangian, Hamiltonian, Hamilton-Jacobi ---
each with its own viewpoint.
The main unifying thread that relates them is the important correspondence
between conserved integrals (invariants) and symmetries.
Some aspects of this correspondence are best understood from the Lagrangian viewpoint,
such as variational symmetries and Noether's theorem. 
Other aspects are best understood within the Hamiltonian approach,
such as the homomorphism relating the Lie algebra of variational symmetries
and the Poisson bracket algebra of conserved integrals. 
This motivates developing a hybrid approach that brings together
the most advantageous parts of the Lagrangian and Hamilton frameworks. 

The present work develops such a hybrid framework
with the aim of highlighting the Noether correspondence
between conserved integrals and symmetries.
In particular, this framework has several major objectives:
\begin{itemize}
\item
  provide a modern form of Noether's theorem that uses only the equations of motion 
  with no knowledge of an explicit Lagrangian being needed;
\item
  import the Poisson bracket and its usage into a formalism based on Lagrangian variables; 
\item
  express the action of symmetries on conserved integrals through the Poisson bracket;
\item
  clarify and explain the features of point symmetries versus dynamical symmetries;
\item
  treat both autonomous and non-autonomous systems on the same footing. 
\end{itemize}

As a general perspective guiding the developments,
it is essential to recognize that 
any well-posed dynamical system with $N$ degrees of freedom 
necessarily possesses $2N$ independent locally conserved integrals \cite{WhiWat-book,Arn-book,BA-book}.
Local conservation means that the integrals may be only piecewise continuous
when evaluated on solution trajectories of the equations of motion. 
This contrasts with global conservation,
where all integrals are continuous on every solution trajectory,
which is often taken as a requirement in modern Hamiltonian frameworks.
In particular,
when a dynamical system possesses exactly $N$ globally conserved integrals,
it is called completely integrable,
while existence of more than $N$ globally conserved integrals
is a restrictive condition, called superintegrability \cite{Mil.Pos.Win}. 

However, there are many physical systems in which global conservation fails to hold
for some conserved integrals.
A primary example is the \LRL/ vector in central force dynamics
\cite{Fra,Per,GolPooSaf}, 
which is globally conserved for the Kepler problem and for the isotropic oscillator,
yet otherwise is only locally conserved. 
Such conserved integrals nevertheless can have an important mathematical and physical utility \cite{AncMeaPas},
despite being only piecewise continuous on solution trajectories.

When a dynamical system possesses a Lagrangian or a Hamiltonian,
its conserved integrals correspond to infinitesimal symmetries
that preserve the variational structure.
Specifically, Noether's theorem show that this correspondence is one-to-one.
Integrals that are only conserved locally correspond to
variational symmetries that are dynamical, 
but the converse is not true generally.
More generally, detection of global conservation of a conserved integral is
unrelated to existence of (special types of) symmetries \cite{AncBalGan},
since symmetry analysis deals solely with local properties of the equations of motion. 

Each variational symmetry,
whether it is a point symmetry or a dynamical symmetry, 
generates a one-dimensional Lie group of symmetry transformations,
which act on the Lagrangian variables or equivalently on the phase space variables
of a dynamical system. 
These transformation groups comprise \cite{Olv-book,BA-book} 
point transformations,
in which the compositions of transformations hold independently of equations of motion,
and dynamical transformations,
whose compositions close only on solution trajectories. 
There is an equivalent formulation of transformation groups
using an extended space of variables given by adjoining the time variable. 
This formulation has a natural gauge freedom,
which is apparent from a Lagrangian viewpoint, 
but is not usually considered in a Hamiltonian framework. 
However, the gauge freedom will be shown to be highly useful
in obtaining an explicit form for the symmetry transformation group
generated by an infinitesimal dynamical symmetry. 

Infinitesimal variational symmetries have two main properties,
well known in the Lagrangian framework \cite{Olv-book,BA-book},
that as vector fields they are closed under commutation
and act as mappings from the set of conserved integrals into itself. 
As main theorems,
the symmetry action will be expressed in terms of the Poisson bracket,
and the commutator of two variational symmetries will be expressed
by the Noether correspondence through the Poisson bracket.

An important class of dynamical systems consists of
Liouville integrable systems defined by the features that 
the conserved integrals are globally continuous on all solution trajectories, 
and the number of Poisson commuting conserved integrals is maximal,
namely equal to the number of degrees of freedom.
These systems can be integrated by the introduction of action-angle variables
associated to the commuting conserved integrals. 
As an application of the main theorems,
several interesting results will be derived
for systems that are Liouville integrable in a local sense. 
First,
local Liouville integrability will be formulated in terms of variational symmetries.
Second,
the action-angle variables will be shown to yield additional locally conserved integrals.
This will provide a complete description of the Noether symmetry group
for these systems. 

The rest of the paper is organized as follows.
Section~\ref{sec:cons.integrals} reviews locally conserved integrals
for general dynamical systems. 
Section~\ref{sec:noether} explains the modern statement of Noether's theorem
and related developments. 
Section~\ref{sec:symmgroup} considers infinitesimal dynamical symmetries
and the corresponding Lie groups of symmetry transformations
acting on the Lagrangian and Hamilton variables. 
The equivalent formulation of these transformations
acting on the extended space of variables is explained,
with emphasis on its inherent gauge freedom. 
Section~\ref{sec:PB.symmaction} presents the connection between
Poisson brackets and symmetry actions on conserved integrals
in a Lagrangian framework. 
Section~\ref{sec:liouville.system} applies the preceding results to dynamical systems that are locally Liouville integrable.
Some concluding remarks are made in Section~\ref{sec:conclude}.

An appendix contains a proof of the result connecting Poisson brackets and symmetry actions. 
Throughout the paper,
the summation convection will be used for repeated Latin indices.

\section{Conserved integrals of dynamical systems}\label{sec:cons.integrals}

Consider a general dynamical system with variables $q^i$, $i=1,\ldots,N$, 
whose equations of motion
\begin{equation}\label{eom.gen}
  \ddot{q}^ i = f^i(t,q,\dot{q})
\end{equation}
arise from a Lagrangian $\Lagr(t,q,\dot{q})$,
\begin{equation}\label{EL.eqn.gen}
  \frac{\delta\Lagr}{\delta q^i}
  = \parder{\Lagr}{q^i} -\frac{d}{dt} \parder{\Lagr}{\dot{q}^i}
  = 0 . 
\end{equation}  
This holds when the right-hand side of the equations of motion \eqref{eom.gen}
satisfies
\begin{equation}\label{f.eom}
  f^i = g^{-1}{}^{ij}\Big( \parder{\Lagr}{q^j} - \parders{\Lagr}{t}{\dot{q}^j} -\dot{q}^k \parders{\Lagr}{q^k}{\dot{q}^j} \Big)
\end{equation}
where $g^{-1}{}^{ij}$ is the inverse of the Hessian matrix
\begin{equation}\label{g.matrix}
  g_{ij} =  g_{ji} = \parders{\Lagr}{\dot{q}^i}{\dot{q}^j} . 
\end{equation}
Specifically,
\begin{equation}\label{EL.id.gen}
  \frac{\delta\Lagr}{\delta q^i} = g_{ij} (f^j - \ddot{q}^j)
\end{equation}
where it is assumed that the Lagrangian is non-degenerate, namely $\det(g) \neq 0$.
A Lagrangian is unique up to the addition of a total time-derivative,
which changes the action principle
\begin{equation}\label{action.gen}
  S = \int_{t_1}^{t_2} \Lagr(t,q(t),\dot q(t))\, dt
\end{equation}
only by an irrelevant endpoint term.
In particular,
under $\Lagr \to \Lagr + \dot{\mathcal{A}}$ for any function $\mathcal{A}(t,q)$,
both $g_{ij}$ and $f^i$ are unchanged. 

A conserved quantity, or first integral, is a function $C(t,q,\dot{q})$
such that its time derivative vanishes for all solutions $q^i(t)$ of the equations of motion,
\begin{equation}\label{conserved}
  \dfrac{d}{dt}C(t,q(t),\dot{q}(t)) =0,
  \quad
  \ddot{q}(t) = f^i(t,q(t),\dot{q})  . 
\end{equation}
When conservation \eqref{conserved} holds for all times $t$,
the quantity $C$ is globally conserved.
This condition is often taken to be part of the definition of conserved quantities
in Hamiltonian mechanics, especially when global properties of solution trajectories are of primary interest. 
However, a quantity $C$ may be conserved only locally in time and still provide
useful information about local properties of solution trajectories.

A well-known example is the \LRL/ vector in central force dynamics
\cite{Fra,Per,GolPooSaf}, 
which is globally conserved for the Kepler problem and for the isotropic oscillator,
yet otherwise is only locally conserved \cite{AncMeaPas}. 
In particular, on a given trajectory $q^i(t)$,
suppose that a periapsis point occurs at time $t^*$,
namely where $q^i(t^*)$ is a local minimum of $|q^i(t)|$.
At any time $t<t^*$, the LRL vector points in the direction of this periapsis point,
and when $q^i(t)$ reaches this point in the trajectory,
then the LRL vector jumps to the next periapsis point 
if the trajectory describes a precessing orbit having multiple periapsis points;
otherwise if the trajectory describes an orbit with a single periapsis,
the LRL vector remains constant for all times $t$.
Consequently, in general central force dynamics,
the LRL vector is locally conserved and globally discontinuous
at periapsis points on solutions trajectories that exhibit precession. 

Since one objective of the present work is to focus on local properties of dynamical systems, 
a precise formulation of local conservation will be given.

\begin{defn}\label{defn:conserved}
A function $C(t,q,\dot{q})$ is a \emph{locally conserved integral}
if its conservation \eqref{conserved} holds piecewise in $t$
for every solution $q^i(t)$ of the equations of motion.
If for all solutions the conserved integral is continuous for all times $t$,
then it is \emph{globally conserved}. 
\end{defn}

Conserved quantities can be usefully classified into two different types
depending on whether or not $t$ appears explicitly in the function $C$. 

\begin{defn}
If a conserved quantity does not depend explicitly on $t$,
then it is called a constant of motion,
$C(q,\dot{q})$.
A conserved quantity $C(t,q,\dot{q})$, with $C_t\neq 0$, is otherwise
called an integral of motion.
\end{defn} 

The dynamical systems considered here will be assumed to be locally well-posed,
meaning that, for any initial conditions $(q^i(t_0),\dot{q}^i(t_0))$ posed at $t=t_0$,
there exists a unique solution $q^i(t)$ locally in time $t\geq t_0$.
For such systems, 
this existence and uniqueness implies the following result. 

\begin{thm}\label{thm:conserved.integrals}
A well-posed dynamical system with $N$ degrees of freedom
possesses $2N$ locally conserved integrals
which are functionally independent.
If the system is autonomous, then
these conserved integrals comprise
$2N-1$ local constants of motion plus one local integral of motion. 
\end{thm}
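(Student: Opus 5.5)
The plan is to build the conserved integrals from the solution flow and its initial data. By local well-posedness, for each initial time $t_0$ and data $(q_0,\dot q_0)$ there is a unique local solution $q^i(t)=Q^i(t;t_0,q_0,\dot q_0)$. I will fix a reference time $t_0$ and regard the map $(q_0,\dot q_0)\mapsto (t,q,\dot q)$, with $t$ held fixed, as a local diffeomorphism of the $2N$-dimensional phase space. This requires the dependence on initial data to be smooth, so I will assume $f^i$ is smooth enough, e.g. $C^1$, which is the standard setting.

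First, uniqueness lets me invert this flow map. Locally in $t$, the quantities $q^i(t_0)$ and $\dot q^i(t_0)$ can be expressed as functions $C^a(t,q,\dot q)$, $a=1,\ldots,2N$, obtained by evolving the state $(q,\dot q)$ at time $t$ back to $t_0$. Second, each $C^a$ is constant along every solution, because every point of a trajectory carries the same initial data. Hence $\frac{d}{dt}C^a=\partial_t C^a+\dot q^i\partial_{q^i}C^a+f^i\partial_{\dot q^i}C^a=0$ holds piecewise in $t$ in the sense of Definition~\ref{defn:conserved}. Third, the $C^a$ are functionally independent: the map $(q,\dot q)\mapsto (C^a)$ at fixed $t$ is the inverse of the flow map, so its Jacobian $\partial(C^a)/\partial(q,\dot q)$ is nonsingular. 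This follows from Liouville's formula, since the variational equation gives a Wronskian satisfying $\det(\text{Jacobian})=\exp\int (\partial f^i/\partial\dot q^i)\,dt\neq0$. Maximality also holds: at most $2N$ independent solutions exist. A conserved $C$ satisfies a single first-order linear PDE in the $2N+1$ variables $(t,q,\dot q)$, so the functions annihilated by the nonvanishing vector field $\partial_t+\dot q^i\partial_{q^i}+f^i\partial_{\dot q^i}$ are locally exactly the functions of $2N$ independent invariants. The method of characteristics gives this directly.

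For the autonomous case, $f$ does not depend on $t$, so the vector field reduces to $X=\partial_t+Y$, where $Y=\dot q^i\partial_{q^i}+f^i(q,\dot q)\partial_{\dot q^i}$ lives on phase space. Away from equilibria $Y\neq0$, and I will straighten $Y$ by the flow-box theorem. This gives local phase-space coordinates $(s,I^1,\ldots,I^{2N-1})$ with $Y=\partial_s$, where the $I^\alpha(q,\dot q)$ are independent and satisfy $YI^\alpha=0$. They are therefore $2N-1$ time-independent constants of motion. The remaining integral is $C=t-s(q,\dot q)$. It satisfies $XC=1-Ys=0$ and $\partial_tC=1\neq0$, so it is an integral of motion, and it is independent of the $I^\alpha$ because it involves $s$. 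Concretely, $s$ is the time parameter along the flow measured from a local transversal section, so $C$ is the "time of passage" integral.

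The main obstacle is the locality issue. The functions $I^\alpha$ and $s$ are defined only on flow boxes, and when a trajectory leaves one box and enters another, or revisits a section, the integrals can jump. This is exactly why only piecewise conservation is claimed. I will handle it by covering phase space with flow boxes and defining the integrals box by box. I will not attempt global continuity, which would genuinely fail, as with the LRL example. The equilibrium points where $Y=0$ must be excluded (they are measure-zero) or treated separately; I will state the result on the open set where $Y\neq0$.
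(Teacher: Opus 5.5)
Your proof is correct. The first half, inverting the initial-value solution by the implicit function theorem to get $2N$ integrals $C_a(t,q,\dot q)$, is the paper's argument. Your independence step is more careful: you use the nonsingular Jacobian of the inverse flow, where the paper only appeals to the initial data being freely choosable. Your maximality remark is something the paper does not prove.

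The autonomous half takes a different route. The paper never leaves the set of initial-value integrals. Because $f$ is $t$-independent, $\partial_t$ commutes with the time derivative on solutions, so each $A_a=\partial_t C_a$ is again conserved and hence a function of $C_1,\ldots,C_{2N}$. It then solves the linear homogeneous PDE $A_a\partial_{C_a}F=0$ to get $t$-independent combinations $F(C)$, and keeps one $C_a$ as the integral of motion. You instead rectify the phase-space field $Y=\dot q^i\partial_{q^i}+f^i\partial_{\dot q^i}$ by the flow-box theorem. From that you read off the $2N-1$ invariants $I^\alpha$ and the time-of-passage integral $t-s$.

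The two arguments are the same fact in different coordinates. In the autonomous case $C(t,x)=\Phi_{t_0-t}(x)$, so $A_a=-Y^a(C)$, and the paper's PDE is just $Y$-invariance written in initial-data coordinates. Consequently the paper's claim that this PDE has $n-1$ independent solutions tacitly needs the $A_a$ not to vanish simultaneously, i.e.\ $Y\neq0$. That is precisely the exclusion of equilibria you make explicit, and it is genuinely necessary: near the stable focus of $\ddot q=-\dot q-q$, no non-constant continuous constant of motion exists on a full neighborhood.

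Your route gives the integral of motion in the explicit form $t-s(q,\dot q)$, matching $T=t-\Theta^1$ in Section~\ref{sec:liouville.system}, and states a clear domain of validity. The paper's route shows that the constants of motion are functions of the initial-value integrals, without needing a separate straightening theorem.
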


\begin{proof}
Consider the initial-value solution $q^i = \phi^i(t)$
and its derivative $\dot{q}^i = \phi^i(t)'$,
which constitute a system of $2N$ equations
in terms of the initial values $(q^i(t_0),\dot{q}^i(t_0))$. 
By the implicit function theorem,
the system has a local solution
$(q^i(t_0),\dot{q}^i(t_0)) = (C_i(t,q,\dot{q}),C_{N+i}(t,q,\dot{q}))$.
Each of these expressions is a function $C_j(t,q,\dot{q})$ that satisfies $\dot{C_j}=0$, 
namely a locally conserved integral.
Since the initial values can be chosen freely,
this implies that the $2N$ conserved integrals
$C_1,\dots,C_{2N}$ are functionally independent.

To obtain $2N-1$ constants of motion from this set,
suppose those conserved integrals that are integrals of motion
comprise a subset, say $C_1(t,q,\dot{q}),\ldots,C_n(t,q,\dot{q})$, with $n\geq 2$,
and let $A_1=\partial_tC_1,\ldots,A_n=\partial_tC_n$ which are non-zero. 
Substitute $q^i = \phi^i(t)$ and $\dot{q}^i = \phi^i(t)'$ into each $A_j$,
with the initial values replaced by $q^i(t_0)= C_i$ and $\dot{q}^i=C_{N+i}$.
If the dynamical system is autonomous,
then $\dot{A}_j = (\partial_t C)\dot{} = \partial_t \dot{C}_j =0$,
and thus $A_j$ is a function only of $C_1,\ldots,C_{2N}$. 

Now consider the linear homogeneous partial differential equation
$0= \partial_t F(C_1,\ldots,C_n)
= A_1\partial_{C_1}F + \cdots + A_n\partial_{C_n}F$.
It will admit $n-1$ particular solutions $F_1,\ldots,F_{n-1}$ 
which are functions of $C_1,\ldots,C_{2N}$.
This set represents $n-1$ functionally independent constants of motion,
whereby $F_1,\ldots,F_{n-1}$ together with $C_{n+1},\ldots,C_{2N}$
constitute $2N-1$ constants of motion.
The integral of motion can be chosen as any one of $C_1,\ldots,C_n$.
\end{proof}

Any physically reasonable dynamical system obeys local well-posedness,
and thus it necessarily admits $2N$ locally conserved integrals.
Of course, this existence argument does not provide a way to find
explicit expression for these integrals in the absence of knowing
the general initial-value solution explicitly. 
It is worth emphasizing that the existence of these $2N$ local integrals
has no implications about global integrability of the system.

\section{Noether's theorem in modern form}\label{sec:noether}

Noether's theorem is commonly formulated as connecting
continuous symmetry groups to conserved quantities.
A symmetry group in this context is a Lie group of transformations
on the variables $(t,q^i,\dot{q}^i)$ such that the action principle \eqref{action.gen} is invariant.
In the case of single dynamical variable, $N=1$,
the most general continuous transformation acting in the coordinate space $(t,q,\dot{q})$
consists of a \emph{contact transformation} \cite{BCA-book}
\[ 
(t,q,\dot{q})\to (t^\dagger,q^\dagger,\dot{q}^\dagger)  =
(t,q,\dot{q})  + \varepsilon \big( \tau(t,q,\dot{q}),\eta(t,q,\dot{q}), \sigma(t,q,\dot{q}) \big)+ O(\varepsilon^2)
\]
with parameter $\varepsilon$,
under which the contact condition $dq = \dot{q}dt$ is required to be preserved.
This condition can be shown to determine 
$\xi = P_q$, $\eta = qP_q-P$, $\sigma = -P_t - \dot{q}P_q$
in terms of a freely specified function $P(t,q,\dot{q})$. 
In contrast, for the case of more than one dynamical variable, $N\geq2$,
the counterpart of a contact transformation is a prolonged \emph{point transformation}
\[
(t,q^i,\dot{q}^i)\to (t^\dagger,q^i{}^\dagger,\dot{q}^i{}^\dagger)  =
(t,q^i,\dot{q}^i)  + \varepsilon \big( \tau(t,q),\eta^i(t,q), \eta_{(1)}^i(t,q,\dot{q}) \big)+ O(\varepsilon^2)
\]
where $\eta_{(1)}^i = \dot\eta^i -\dot{\tau}\dot{q}^i$ is determined by
$dq^i{}^\dagger = \dot{q}^\dagger dt^\dagger$
in terms of two freely specified functions $\tau(t,q)$ and $\eta(t,q)$. 
Thus, a priori, single variable dynamical systems can have a larger symmetry group
in comparison to multi-variable dynamical systems.

However, the conserved quantity that arises from a continuous symmetry group
turns out to involve only the infinitesimal generator of the transformations. 
As a consequence, Noether's theorem actually has a more general formulation
utilizing infinitesimal generators whose form depends on a function of
$t$, $q^i$, $\dot{q}^i$.
Such generators were, in fact, considering in Noether's original work
\cite{Noe} (see also \Ref{Olv-book}). 
The corresponding finite transformations,
which act on the infinite space of variables $(t,q,\dot{q},\ddot{q},\ldots)$ 
containing time derivatives of all orders,
will not be needed. 

To state the general formulation succinctly, 
the appropriate mathematical setting will be the finite jet space, $J$,
coordinatized by $(t,q^i,\dot{q}^i,\ddot{q}^i)$.
This space has a natural fibering 
given by the position, velocity, and acceleration variables 
$(q^i,\dot{q}^i,\ddot{q}^i)\in \Rnum^{3N}$, 
at each time $t \in \Rnum$, 
where the fiber is called the vertical space.
The following main geometrical objects will be utilized: 
vertical vector fields of the form 
$\X_P = P^i(t,q,\dot{q}) \partial_{q^i}$; 
the Lagrangian 1-form $\Lagr(q,\dot{q})\,dt$;
the total time-derivative vector field
$D_t = \partial_t + \dot{q}^i\partial_{q^i} + \ddot{q}^i\partial_{\dot{q}^i}$;
contact relations 
$dq^i = \dot{q}^i\,dt$ and $d\dot{q}^i = \ddot{q}^i\,dt$.

One useful identity, holding due to the contact relations, is
$dF = D_t F\, dt$ for any function $F(t,q,\dot{q})$.
A second useful identity is that $d$ commutes with $\pr\X_P$,
due to $[\pr\X_P,D_t] = 0$,
where $\pr\X_P = P^i\partial_{q^i} + \dot{P}^i\partial_{\dot{q}^i}$
denotes the prolongation of $\X$ as a vertical vector field.

\subsection{Variational symmetries}

In the infinitesimal formulation of variational symmetries
for a Lagrangian dynamical system \eqref{EL.eqn.gen} with $N\geq 1$ variables $q^i$, 
the number of functions that are freely specifiable a priori
is the same as for infinitesimal contact transformations. 

\begin{defn}\label{defn:varsymm}
An \emph{infinitesimal variational symmetry} is a vertical vector field 
\begin{equation}\label{X.gen}
  \X_P = P^i(t,q,\dot{q}) \partial_{q^i}
\end{equation}
such that the Lagrangian 1-form is invariant up to an exact 1-form
\begin{equation}\label{X.varsymm.cond}
  \pr\X_P\rfloor d(\Lagr\,dt) = dW
\end{equation}
for some function $W(t,q,\dot{q})$,
under the prolonged vector field
\begin{equation}\label{pr.X.gen}
  \pr\X_P = P^i\partial_{q^i} + \dot{P}^i\partial_{\dot{q}^i} . 
\end{equation}
\end{defn}

The condition of symmetry invariance \eqref{X.varsymm.cond}
holds modulo the contact relations.
In particular,
\[
\pr\X_P\rfloor d(\Lagr\,dt)
= \big( \pr\X_P\rfloor d\Lagr(q,\dot{q}) \big)\, dt
=  \big( P^i \partial_{q^i}\Lagr + \dot{P}^i \partial_{\dot{q}^i}\Lagr \big)\,dt
\]
and
\[
dW = \big( \partial_t W +\dot{q}^i \partial_{q^i}\Lagr + \ddot{q}^i \partial_{\dot{q}^i}\Lagr \big)\,dt = D_t W\,dt , 
\]
which shows that 
\begin{equation}\label{inv.cond.gen}
  \pr\X_P(\Lagr) 
  =  \totder{W}{t} . 
\end{equation}
Thus, the action principle \eqref{action.gen} is preserved up to an irrelevant end point term. 
This is equivalent to the condition 
\begin{equation}\label{inv.cond.alt.gen}
 \frac{\delta \pr\X_P(\Lagr)}{\delta q^i} =0
\end{equation}
since the variational derivative annihilates a function
if and only if the function is a total time derivative.
The latter statement has a simple formulation in the present setting: 
$d^2F=0$ is necessary and sufficient for $F=dW$.
This reflects the fact that the jet space $J$ has trivial cohomology,
which holds as a consequence of a general result in the variational bi-complex
\cite{Olv-book}. 

\begin{prop}\label{prop:symm.cond.withoutW.gen}
A vertical vector field \eqref{X.gen} is an infinitesimal variational symmetry
if and only if it satisfies 
\begin{equation}\label{inv.cond.alt2.gen}
  d\big(\pr\X_P\rfloor d(\Lagr\,dt)\big)= 0 . 
\end{equation}
\end{prop}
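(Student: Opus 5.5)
The plan is to use the two facts recorded just before the statement: $d^2=0$, and the triviality of the cohomology of the jet space $J$, in the form ``$d^2F=0$ is necessary and sufficient for $F=dW$''. Write $\omega = \pr\X_P\rfloor d(\Lagr\,dt)$. By the computation following Definition~\ref{defn:varsymm}, modulo the contact relations $\omega = \pr\X_P(\Lagr)\,dt$, which is a horizontal 1-form $F\,dt$ with $F=\pr\X_P(\Lagr)$. Definition~\ref{defn:varsymm} says that $\X_P$ is a variational symmetry if and only if $\omega = dW$ for some function $W(t,q,\dot q)$. So the proposition reduces to showing that $\omega$ is exact if and only if it is closed.

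For necessity, suppose $\omega = dW$. Then $d\omega = d^2 W = 0$, which is \eqref{inv.cond.alt2.gen}. Here I would note that $d$ is taken compatibly with the contact relations, i.e. on $J$ with contact forms quotiented out. Then $dW = D_tW\,dt$ is consistent, and $d^2=0$ still holds on the quotient, because the contact ideal is differentially closed: $d(dq^i-\dot q^i dt) = -(d\dot q^i - \ddot q^i dt)\wedge dt$ lies in the ideal.

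For sufficiency, suppose $d\omega=0$. I would invoke the cohomological statement on $J$ quoted just before the proposition. In the form the paper states it, this gives $\omega = dW$ for some $W$ directly. To make it concrete, I would relate it to \eqref{inv.cond.alt.gen}: closedness of $F\,dt$ modulo contact forms is equivalent to $\delta F/\delta q^i=0$. The classical fact that a function with vanishing Euler--Lagrange expression is a total derivative $D_tW$ then gives $F = D_tW$, hence \eqref{inv.cond.gen} and $\omega = dW$. Finally, I would check that $W$ can be taken to depend only on $(t,q,\dot q)$, which holds because $F$ is at most first order in $\ddot q$ and the homotopy formula lowers the order by one.

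The main obstacle is making precise what ``$d$ modulo contact relations'' means, so that closedness really corresponds to the vanishing of the variational derivative. Written naively, every top-degree horizontal form $F\,dt$ is trivially closed, since $dt\wedge dt=0$. So I would work in the variational bicomplex: $d\omega=0$ should be read as the vanishing of the vertical (contact-form) component of $d(F\,dt)$ after integrating by parts. That component is $\frac{\delta F}{\delta q^i}\,\theta^i\wedge dt$ with $\theta^i = dq^i - \dot q^i dt$. Once this interpretation is fixed, the equivalence follows from the exactness of the variational bicomplex cited from \cite{Olv-book}.
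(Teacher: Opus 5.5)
Your proposal is correct and follows essentially the paper's own argument. The paper likewise reduces $\pr\X_P\rfloor d(\Lagr\,dt)=dW$ to $\pr\X_P(\Lagr)=D_tW$, that is, to the vanishing of $\delta\,\pr\X_P(\Lagr)/\delta q^i$ in \eqref{inv.cond.alt.gen}, and then invokes the exactness (trivial cohomology) of the variational bicomplex on $J$ from \cite{Olv-book}; your point that ``closed'' must mean the vanishing of this Euler--Lagrange component, since naively every $F\,dt$ is closed modulo contact forms, is a sound clarification of the paper's looser phrasing ``$d^2F=0$ is necessary and sufficient for $F=dW$''.
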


Therefore, knowledge of the function $W(t,q,\dot{q})$ is unnecessary
for determining when a vector vector field is a symmetry.
Moreover, the invariance condition \eqref{inv.cond.alt2.gen} is unchanged
if the Lagrangian is altered by the addition of any total time derivative,
$\Lagr \to \Lagr + D_t\mathcal{A}$,
since $d(\Lagr\,dt) \to d(\Lagr\,dt +d\mathcal{A}) = d(\Lagr\,dt)$
for any function $\mathcal{A}(t,q,\dot{q})$. 

There is a further useful rewriting of the invariance condition \eqref{inv.cond.alt2.gen}
in terms of the equations of motion \eqref{eom.gen}. 
For any function $F(t,q,\dot{q},\ddot{q})$,
introduce the Euler-Lagrange operators $E_i^{(0)}$, $E_i^{(1)}$, $E_i^{(2)}$ 
via the identity
\begin{equation}\label{EL.ops.id}
  dF = \partial_t F\,dt + E_i^{(0)}(F)\,dq^i + D_t(E_i^{(1)}(F)\,dq^i) + D_t^2(E_i^{(2)}(F)\,dq^i)
\end{equation}
which is derived through integration by parts, holding in the twice-prolonged jet space. 
Explicit expressions for these operators are given by
\begin{equation}\label{higher.EL.ops}
  E_i^{(0)} = \partial_{q^i} - D_t \partial_{\dot{q}^i} + D_t^2 \partial_{\ddot{q}^i} , 
  \quad
  E_i^{(1)} = \partial_{\dot{q}^i} - 2D_t \partial_{\ddot{q}^i} ,
  \quad
  E_i^{(2)} = \partial_{\ddot{q}^i} . 
\end{equation}
They can be shown to satisfy the following properties:
$E_i^{(0)}(F) =0$ 
if and only if $F=D_t W$ for some function $W(t,q,\dot{q})$,
since $E_i^{(0)}$ is simply the variational derivative $\dfrac{\delta}{\delta q^i}$;
similarly, 
$E_i^{(1)}(F) =0$
if and only if $F=D_t^2 W$ for some function $W(t,q)$.
(Note that everything here has a natural extension to any higher-order jet space
\cite{Olv-book,Anc-review}.)

\begin{lem}\label{lem:noether.id}
Any vertical vector field \eqref{X.gen}
satisfies the variational (Noether) identity 
\begin{equation}\label{noether.id}
  \pr\X_P\rfloor d(\Lagr\,dt)
  = P^i E_i^{(0)}(\Lagr)\,dt + d (P^i E_i^{(1)}(\Lagr)) 
\end{equation}
where $E_i^{(0)}(\Lagr) = g_{ij} (f^j - \ddot{q}^j)$ is proportional to the equations of motion \eqref{eom.gen}. 
\end{lem}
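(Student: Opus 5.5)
The plan is to work with the coefficient form of \eqref{noether.id}. As noted after Definition~\ref{defn:varsymm}, the left side equals $\pr\X_P(\Lagr)\,dt$ modulo the contact relations. On the right side, $d(P^iE_i^{(1)}(\Lagr)) = D_t(P^iE_i^{(1)}(\Lagr))\,dt$ by the identity $dF = D_tF\,dt$. So it suffices to prove the scalar identity
\begin{equation*}
  P^i\partial_{q^i}\Lagr + (D_tP^i)\,\partial_{\dot q^i}\Lagr
  = P^i E_i^{(0)}(\Lagr) + D_t\big(P^i E_i^{(1)}(\Lagr)\big).
\end{equation*}

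Since $\Lagr$ depends only on $(t,q,\dot q)$, \eqref{higher.EL.ops} reduces to $E_i^{(1)}(\Lagr)=\partial_{\dot q^i}\Lagr$ and $E_i^{(0)}(\Lagr)=\partial_{q^i}\Lagr - D_t\partial_{\dot q^i}\Lagr$. The identity then follows from the Leibniz rule
\[
D_t\big(P^i\partial_{\dot q^i}\Lagr\big) = (D_tP^i)\,\partial_{\dot q^i}\Lagr + P^i D_t\partial_{\dot q^i}\Lagr .
\]
Adding $P^iE_i^{(0)}(\Lagr)$ to both sides cancels the $P^iD_t\partial_{\dot q^i}\Lagr$ terms, which gives the scalar identity. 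Equivalently, one can read the result off \eqref{EL.ops.id} by contracting $\pr\X_P$ into $d\Lagr$ with $dq^i\to P^i$, and using that $\pr\X_P$ commutes with $D_t$. The direct integration by parts is cleaner, so I will use that.

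The remaining claim is $E_i^{(0)}(\Lagr)=g_{ij}(f^j-\ddot q^j)$. Since $E_i^{(0)}(\Lagr)=\delta\Lagr/\delta q^i$, this is exactly \eqref{EL.id.gen}. I would still verify it by expanding
\[
D_t\partial_{\dot q^i}\Lagr = \partial_t\partial_{\dot q^i}\Lagr + \dot q^k\partial_{q^k}\partial_{\dot q^i}\Lagr + \ddot q^k g_{ik}
\]
and substituting \eqref{f.eom}, which shows $g_{ij}f^j = \partial_{q^i}\Lagr - \partial_t\partial_{\dot q^i}\Lagr - \dot q^k\partial_{q^k}\partial_{\dot q^i}\Lagr$.

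The only point needing care is bookkeeping. Everything holds modulo the contact relations, so that $d$ of a function equals $D_t$ of it times $dt$, and $\pr\X_P\rfloor d(\Lagr\,dt)$ is read as $\pr\X_P(\Lagr)\,dt$ as in the text. I will state this interpretation explicitly. Beyond that, the proof is a one-line integration by parts and I expect no real obstacle.
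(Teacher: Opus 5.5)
Your proposal is correct and is essentially the paper's argument. The paper contracts $\pr\X_P$ into the integration-by-parts identity \eqref{EL.ops.id} and uses that $\pr\X_P$ commutes with $D_t$, which is exactly the alternative you mention; your Leibniz-rule expansion of $D_t(P^i\partial_{\dot q^i}\Lagr)$ is that same integration by parts written out for $\Lagr(t,q,\dot q)$, and your explicit check of $E_i^{(0)}(\Lagr)=g_{ij}(f^j-\ddot q^j)$ via \eqref{f.eom} is also correct, whereas the paper simply cites \eqref{EL.id.gen}.
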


The proof amounts to the following easy computation: 
\[\begin{aligned}
\pr\X_P\rfloor d \Lagr
& =\pr\X_P\rfloor \big( E_i^{(0)}(F)\,dq^i + D_t(E_i^{(1)}(F)\,dq^i) \big) \\
& = P^i E_i^{(0)}(F)  + D_t( P^i E_i^{(1)}(F) ) 
\end{aligned}\]
since vector fields commute with total derivatives.
Multiplying by $dt$ and expressing $D_t$ as a total differential 
then yields the identity \eqref{noether.id}.

Now, combining the variational identity \eqref{noether.id}
and the left-hand side of the symmetry invariance condition \eqref{inv.cond.alt2.gen}
directly gives 
\begin{equation}
  d\big( \pr\X(P)\rfloor d(\Lagr\,dt) \big)
  = d\big( P^i E_i^{(0)}(\Lagr)\,dt + d (P^i E_i^{(1)}(\Lagr)) \big)
  = d\big( P^i E_i^{(0)}(\Lagr)\,dt \big) . 
\end{equation}
Since $E_i^{(0)}(\Lagr) = \dfrac{\delta \Lagr}{\delta q^i}$ is proportional to the equations of motion \eqref{EL.id.gen},
condition \eqref{inv.cond.alt2.gen} becomes
\begin{equation}
 d\big( P^i g_{ij}(f^j - \ddot{q}^j) \,dt \big) = 0 . 
\end{equation}
This establishes the following modern statement of symmetry invariance,
which is a counterpart of a similar result for systems of partial differential equations
\cite{Anc-review}. 

\begin{prop}\label{prop:symm.cond.EL.gen}
A vertical vector field \eqref{X.gen} is an infinitesimal variational symmetry
if and only if it satisfies
\begin{equation}\label{inv.cond.ELeqns.gen}
 d\big( \X_P\rfloor (g_{ij}(\ddot{q}^j -f^j) dq^i) \,dt \big) = 0 , 
\end{equation}
which involves only the equations of motion \eqref{eom.gen}
and the Hessian matrix \eqref{g.matrix}. 
\end{prop}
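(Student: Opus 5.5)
The plan is to reduce the statement to Proposition~\ref{prop:symm.cond.withoutW.gen} by using the Noether identity of Lemma~\ref{lem:noether.id}. By Proposition~\ref{prop:symm.cond.withoutW.gen}, $\X_P$ is an infinitesimal variational symmetry if and only if $d\big(\pr\X_P\rfloor d(\Lagr\,dt)\big)=0$. So it suffices to show that the left-hand side of this condition equals, up to an overall sign, the left-hand side of \eqref{inv.cond.ELeqns.gen}.

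\textbf{Step 1: drop the exact term.} Apply $d$ to both sides of the identity \eqref{noether.id}. Because $d^2=0$, the exact term $d(P^iE_i^{(1)}(\Lagr))$ disappears, leaving
\[
d\big(\pr\X_P\rfloor d(\Lagr\,dt)\big)=d\big(P^iE_i^{(0)}(\Lagr)\,dt\big).
\]

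\textbf{Step 2: substitute the equations of motion.} Using \eqref{EL.id.gen}, namely $E_i^{(0)}(\Lagr)=\delta\Lagr/\delta q^i = g_{ij}(f^j-\ddot q^j)$, the right-hand side becomes $-d\big(P^ig_{ij}(\ddot q^j-f^j)\,dt\big)$.

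\textbf{Step 3: rewrite as a contraction.} Since $\X_P=P^i\partial_{q^i}$ is vertical, $\X_P\rfloor dq^i=P^i$. Hence
\[
P^ig_{ij}(\ddot q^j-f^j)=\X_P\rfloor\big(g_{ij}(\ddot q^j-f^j)\,dq^i\big).
\]
The two vanishing conditions are therefore equivalent, which proves the proposition. The resulting condition involves only $g_{ij}$ and $f^i$. Both are unchanged under $\Lagr\to\Lagr+D_t\mathcal{A}$, which confirms the remark that no explicit Lagrangian is needed.

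\textbf{Main obstacle.} The delicate point is the bookkeeping of the contact relations. The form $dq^i$ inside the contraction must be read as a vertical form, contracted with $\X_P$ before $dq^i=\dot q^i\,dt$ is imposed. Likewise, all identities, including the commutation of $d$ with $\pr\X_P$ used in Lemma~\ref{lem:noether.id}, must hold modulo the contact relations on the twice-prolonged jet space. I would check explicitly that applying $d$ after this reduction is consistent, so that $d(F\,dt)=0$ means the same thing on both sides. The nontrivial content, namely that $d$-closedness of $\pr\X_P\rfloor d(\Lagr\,dt)$ is equivalent to exactness, is already supplied by Proposition~\ref{prop:symm.cond.withoutW.gen} and the trivial cohomology of $J$.
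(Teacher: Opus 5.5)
Your proposal is correct and follows the paper's argument: apply $d$ to the Noether identity \eqref{noether.id} so that the exact term $d(P^iE_i^{(1)}(\Lagr))$ drops out, substitute $E_i^{(0)}(\Lagr)=g_{ij}(f^j-\ddot q^j)$ from \eqref{EL.id.gen}, and invoke Proposition~\ref{prop:symm.cond.withoutW.gen}. The contact-relation bookkeeping you flag needs no separate check, because both conditions are the same $d$ applied to forms that differ only by an exact term and an overall sign.
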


\subsection{Multipliers}

Turning now to Noether's theorem,
a preliminary step will be to formulate the condition 
for a function $C(t,q,\dot{q})$ to be a locally conserved quantity.
Let $\solnsp$ denote the space of solutions of the equations of motion \eqref{eom.gen}. 
This space is formally given by the set of surfaces 
$\ddot{q}^i - f^i =0$, $i=1,\ldots,N$, in the jet space $J$.

Any time derivative $D_tC$ can be expanded by chain rule to get
\begin{equation}\label{DtC}
  \dot{C} = \Dt C +(\ddot{q}^i -f^i) \partial_{\dot{q}^i} C
\end{equation}
where
\begin{equation}\label{time.der.solnsp}
\Dt
= \partial_t + \dot{q}^i\partial_{q^i} + f^i\partial_{\dot{q}^i}
\end{equation}
represents the time derivative restricted to the solution space.
This immediately yields the requisite necessary and sufficient condition. 

\begin{prop}\label{prop:conserved}
A function $C(t,q,\dot{q})$ is a locally conserved quantity
if and only if it identically satisfies
\begin{equation}\label{local.conserved}
  \Dt C =0
\end{equation}
\end{prop}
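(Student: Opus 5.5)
The plan is to read the condition off the chain-rule decomposition \eqref{DtC}, treating the two directions separately. The one thing to keep in mind is that the variables $(t,q^i,\dot q^i)$ are free coordinates on the solution space $\solnsp$: by local well-posedness, every point $(t_0,q_0,\dot q_0)$ is attained by some solution at time $t_0$.

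\emph{Sufficiency.} Suppose $\Dt C=0$ holds identically as a function of $(t,q,\dot q)$. Take any solution $q^i(t)$ of \eqref{eom.gen} and restrict \eqref{DtC} to it. Along the solution $\ddot q^i-f^i=0$, so the second term in \eqref{DtC} vanishes. Hence
\[
\dfrac{d}{dt}C(t,q(t),\dot q(t)) = (\Dt C)(t,q(t),\dot q(t)) = 0
\]
on every interval where $C$ is differentiable along the trajectory. This is exactly \eqref{conserved} holding piecewise in $t$, so $C$ is locally conserved in the sense of Definition~\ref{defn:conserved}.

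\emph{Necessity.} Suppose $C$ is locally conserved, and fix an arbitrary point $(t_0,q_0,\dot q_0)$ in a region where $C$ is smooth. By local well-posedness there is a solution with $q(t_0)=q_0$ and $\dot q(t_0)=\dot q_0$. On that solution $\dot C=0$ on a piece containing $t_0$ (or adjacent to $t_0$, in which case we pass to a one-sided limit using continuity of $\Dt C$). Evaluating \eqref{DtC} at $t_0$, where $\ddot q=f$, gives $(\Dt C)(t_0,q_0,\dot q_0)=0$. Since the point was arbitrary, $\Dt C$ vanishes identically as a function on $(t,q,\dot q)$-space, at least wherever $C$ is smooth.

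The only delicate step is this necessity argument. Two points need care there. First, one must invoke existence of a solution through every initial point, so that vanishing along all solutions becomes an identity in the free variables $(t,q,\dot q)$. Second, the piecewise nature of local conservation means that points where $C$ jumps, such as periapsis points for the \LRL/ vector, must be excluded; the identity is understood on the open set where $C$ is smooth.
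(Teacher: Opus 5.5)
Your proof is correct and follows the paper's approach: the paper reads the proposition directly off the chain-rule identity \eqref{DtC}, since the second term there vanishes on $\solnsp$. You also make explicit the use of local well-posedness, namely that every point $(t_0,q_0,\dot q_0)$ lies on some solution, which is what turns vanishing along all solutions into an identity in $(t,q,\dot q)$; the paper leaves that step implicit.
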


As a consequence, off of the solution space,
a locally conserved quantity obeys 
\begin{equation}\label{multr.eqn.gen}
  \dot{C} = (\ddot{q}^i - f^i)  Q_i,
  \quad
  Q_i = \partial_{\dot{q}^i} C
\end{equation}
where the function $Q_i(t,q,\dot{q})$ is called a \emph{multiplier}.
A natural geometric meaning is known for multipliers as 1-forms
$Q_i (d\ddot{q}^i - d f^i)$,
as explained in \Ref{AncBao},
but this meaning will not be needed here. 

The essence of Noether's theorem is that it determines
a direct relationship between multipliers and variational symmetries.

\subsection{Noether correspondence between variational symmetries and conserved integrals}

Return to the variational identity \eqref{noether.id}
and the original form of the symmetry condition \eqref{X.varsymm.cond}.
Together they yield
\begin{equation}\label{noether.rel.ELop}
d(W - P^i E_i^{(1)}(\Lagr))  = P^i E_i^{(0)}(\Lagr)\,dt , 
\end{equation}
which has the explicit component form 
\begin{equation}\label{noether.rel}
D_t\big( P^i \partial_{\dot{q}^i}\Lagr  - W \big) 
=g_{ij} (\ddot{q}^i - f^i) P^j . 
\end{equation}
using the Euler-Lagrange relation \eqref{EL.id.gen}. 
Since the right-hand side vanishes on the solution space $\solnsp$, 
a conserved quantity is obtained, 
\begin{equation}\label{conserved.gen}
  \dot{C}\big|_{\solnsp} =0,
  \quad
  C   = P^i \partial_{\dot{q}^i}\Lagr  - W . 
\end{equation}
This is the most familiar general statement of Noether's theorem:
\emph{every infinitesimal variational symmetry produces a locally conserved quantity}.

Importantly, the converse statement also holds.
For any locally conserved quantity $C(t,q,\dot{q})$,
the multiplier equation \eqref{multr.eqn.gen}
can be expressed as
\begin{equation}
  \dot{C} = (\ddot{q}^i - f^i)  g_{ij} P^i,
  \quad
  P^i = g^{-1}{}^{ij}Q_j
\end{equation}
using the invertibility of the Hessian matrix $g_{ij}$. 
Then the Euler-Lagrange relation \eqref{EL.id.gen} yields
\begin{equation}
  P^i \frac{\delta\Lagr}{\delta q^i} = P^i E_i^{(0)}(\Lagr) = -\dot{C} . 
\end{equation}
Substitution of this expression into the variational identity \eqref{noether.id}
leads to the relation
\begin{equation}
  \pr\X_P\rfloor d(\Lagr\,dt)
  = d (P^i E_i^{(1)}(\Lagr) - C) 
\end{equation}
where $\pr\X_P$ is the prolonged vertical vector field \eqref{pr.X.gen} 
as defined in terms of the function $P^i$. 
This implies that the symmetry condition holds in its original form \eqref{X.varsymm.cond},
with 
\begin{equation}
  W = P^i E_i^{(1)}(\Lagr) - C = P^i \partial_{\dot{q}^i}\Lagr - C . 
\end{equation}
Hence,
\emph{every locally conserved quantity arises from an infinitesimal variational symmetry},
which is the converse of the familiar form of Noether's theorem stated previously. 

These developments show that the full statement of Noether's theorem
provides a one-to-one correspondence between
locally conserved quantities and infinitesimal variational symmetries.
In fact, this correspondence can be stated in a completely explicit form with a little more work,
using the multiplier equation \eqref{multr.eqn.gen}.

\begin{thm}\label{thm:noether}
For a dynamical system with a Lagrangian formulation \eqref{EL.eqn.gen},
a function $C(t,q,\dot{q}^i)$ is a locally conserved quantity \eqref{local.conserved}
if and only if
the vertical vector field $\X =P^i(t,q,\dot{q}) \partial_{q^i}$
is an infinitesimal variational symmetry \eqref{inv.cond.alt2.gen}, 
where $C$ and $P^i$ are explicitly related via the Hessian matrix \eqref{g.matrix} by
\begin{equation}\label{PfromC.gen}
  P^ i = g^{-1}{}^{ij} C_{\dot{q}^j}
\end{equation}
and 
\begin{equation}\label{CfromP.gen}
  \begin{aligned}
  C(t,q,\dot{q})
  = & \int_{\mathcal{C}} \big(
  {- g_{ij}} f^i P^j + \dot{q}^k ( (g_{ij} f^i P^j)_{\dot{q}^k} + (g_{kj} P^j)_t + \dot{q}^i(g_{kj} P^j)_{q^i} ) \big)\,dt
  \\&\qquad
  -\big( (g_{ij} f^i P^j)_{\dot{q}^k} + (g_{kj} P^j)_t + \dot{q}^i(g_{kj} P^j)_{q^i} \big)\, dq^k
  + g_{kj} P^j \, d\dot{q}^k . 
  \end{aligned}
\end{equation}
Here $\mathcal C$ denotes any curve in the coordinate space $(t,q,\dot{q}^i)$, 
starting at an arbitrary point $(t_0,q^i_0,\dot{q}^i_0)$. 
\end{thm}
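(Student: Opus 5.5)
The if-and-only-if part of Theorem~\ref{thm:noether} has essentially been established in the discussion preceding it. It remains to confirm that this correspondence is given by the explicit formulas \eqref{PfromC.gen} and \eqref{CfromP.gen}. The plan has two directions, each using the multiplier equation \eqref{multr.eqn.gen}.

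\emph{From $C$ to $P$.} Given a locally conserved $C$, Proposition~\ref{prop:conserved} gives $\Dt C=0$. Then \eqref{DtC} yields $\dot C=(\ddot q^i-f^i)C_{\dot q^i}$. Setting $P^i=g^{-1}{}^{ij}C_{\dot q^j}$ gives $P^iE_i^{(0)}(\Lagr)=-\dot C$ via \eqref{EL.id.gen}. Substituting into Lemma~\ref{lem:noether.id} shows that $\pr\X_P\rfloor d(\Lagr\,dt)$ is exact, so \eqref{inv.cond.alt2.gen} holds. This is just the converse argument already given, with \eqref{PfromC.gen} read off from $Q_i=C_{\dot q^i}$.

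\emph{From $P$ to $C$.} Given a variational symmetry, \eqref{noether.rel} shows that there is a conserved $C$ with $\dot C=(\ddot q^i-f^i)g_{ij}P^j$ identically off $\solnsp$. Expanding $\dot C$ by the chain rule and comparing coefficients of $\ddot q^k$ gives $C_{\dot q^k}=g_{kj}P^j$. The remaining part of the identity is $C_t+\dot q^kC_{q^k}=-g_{ij}f^iP^j$. Differentiating this relation with respect to $\dot q^k$ gives
\[
C_{q^k}=-(g_{ij}f^iP^j)_{\dot q^k}-(g_{kj}P^j)_t-\dot q^i(g_{kj}P^j)_{q^i},
\]
after using $C_{\dot q^k t}=(g_{kj}P^j)_t$ and $C_{\dot q^k q^i}=(g_{kj}P^j)_{q^i}$. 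Substituting back gives
\[
C_t=-g_{ij}f^iP^j-\dot q^kC_{q^k}
=-g_{ij}f^iP^j+\dot q^k\big((g_{ij}f^iP^j)_{\dot q^k}+(g_{kj}P^j)_t+\dot q^i(g_{kj}P^j)_{q^i}\big).
\]
These three expressions are exactly the $dt$, $dq^k$ and $d\dot q^k$ coefficients of the 1-form in \eqref{CfromP.gen}. Hence that 1-form is $dC$ on the coordinate space $(t,q,\dot q)$, and $C$ is recovered, up to an additive constant, as its line integral from the base point.

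\emph{Main obstacle.} One must check that this 1-form is closed, so that the integral is path-independent and defines $C$ directly from $P$ without assuming $C$ exists. The approach is to show that closedness is equivalent to the symmetry condition. By Proposition~\ref{prop:symm.cond.EL.gen}, this condition is $d\big(P^ig_{ij}(\ddot q^j-f^j)\,dt\big)=0$, i.e.\ $E^{(0)}\big(P^ig_{ij}(\ddot q^j-f^j)\big)=0$. By the stated exactness property of $E^{(0)}$, that expression equals $D_tW$ for some $W(t,q,\dot q)$. Comparing the $\ddot q$ coefficients gives $W_{\dot q^k}=g_{kj}P^j$ and $W_t+\dot q^kW_{q^k}=-g_{ij}f^iP^j$. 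Redoing the derivation above with $W$ in place of $C$ shows that the 1-form equals $dW$, which proves both closedness and $C=W+\text{const}$. The step requiring the most care is the differentiation and regrouping of the mixed partials to match the exact grouping of terms in \eqref{CfromP.gen}.
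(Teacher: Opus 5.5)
Your proposal is correct and follows the paper's proof. Like the paper, you split the multiplier identity by the $\ddot q^k$ coefficients to get $C_{\dot q^k}=g_{kj}P^j$ and $C_t+\dot q^kC_{q^k}=-g_{ij}f^iP^j$, differentiate the latter with respect to $\dot q^k$, substitute back, and recover $C$ as a line integral. Your added closedness check via exactness of $E^{(0)}$ makes explicit what the paper leaves implicit. It is somewhat redundant, though: the $C$ supplied by \eqref{noether.rel} already makes the 1-form exact, and your $W$ is that same conserved quantity.
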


Here and hereafter, for ease of notation, subscripts are used to partial derivatives. 

\begin{proof}
Suppose $C(t,q,\dot{q})$ is a conserved quantity.
First, expand the left-hand side of the multiplier equation \eqref{multr.eqn.gen}
by the chain rule to get 
\begin{equation}
  C_{t}  + \dot{q}^i C_{q^i} C  + \ddot{q}^i C_{\dot{q}^i}
  = g_{ij} (\ddot{q}^i - f^i) P^j
\end{equation}
where partial derivatives are denoted here and hereafter by subscripts. 
Next, split this equation with respect to $\ddot{q}^i$, which gives the relations
\begin{equation}\label{C.ddotq}
  C_{\dot{q}^i}  = g_{ij} P^j
\end{equation}
along with 
\begin{equation}\label{C.oth}
  C_{t} + \dot{q}^i C_{q^i} 
  = - g_{ij} f^i P^j . 
\end{equation}
Inverting the first relation yields the result \eqref{PfromC.gen}. 

Conversely, suppose $\X_p = P^i(t,q,\dot{q})\partial_{q^i}$
is an infinitesimal variational symmetry.
First, take the derivative of relation \eqref{C.oth} with respect to $\dot{q}^k$
and substitute the first relation \eqref{C.ddotq} to get 
\begin{equation}\label{C.dotq}
  C_{q^k}  = -\big( (g_{ij} f^i P^j)_{\dot{q}^k} + (g_{kj} P^j)_t + \dot{q}^i (g_{kj} P^j)_{q^i} \big) . 
\end{equation}
Next, substitute this expression back into the second relation \eqref{C.oth},
which yields
\begin{equation}\label{C.t}
 C_{t}   = - g_{ij} f^i P^j + \dot{q}^k \big( (g_{ij} f^i P^j)_{\dot{q}^k} + (g_{kj} P^j)_t + \dot{q}^i(g_{kj} P^j)_{q^i} \big) . 
\end{equation}
Finally, apply a line integral to obtain $C$ from 
$dC = C_{t}\,dt + C_{q^i}\,dq^i + C_{\dot{q}^i}\,d\dot{q}^i$
via expressions \eqref{C.ddotq}, \eqref{C.dotq}, \eqref{C.t},
which yields the result \eqref{CfromP.gen}. 
\end{proof}

The following well-known remark is useful to keep in mind \cite{BA-book}. 
There is a straightforward connection between variational symmetries of a Lagrangian 
and symmetries of the equations of motion.
Since an infinitesimal variational symmetry preserves the action principle up to an end point term,
which does not change the equations of motion,
it necessarily preserves the extrema of the action principle
and thus yields an infinitesimal symmetry of the equations of motion.
The converse, however, is not true in general,
as there may exist symmetries of the equations of motion --- 
such as scalings --- that do not preserve the action principle.

The sequel will use the prolonged jet space $J^{(1)}$
with coordinates $(t,q^i,\dot{q}^i,\ddot{q}^i,\dddot{q}{}^i)$. 

\begin{defn}\label{defn:symm}
An \emph{infinitesimal symmetry} of the equations of motion \eqref{eom.gen}
is a vertical vector field
\begin{equation}\label{X.gen.alt}
  \X_P = P^i(t,q,\dot{q}) \partial_{q^i}
\end{equation}
whose prolongation 
\begin{equation}
\pr\X_P = P^i\partial_{q^i} + \dot{P}^i\partial_{\dot{q}^i} + \ddot{P}^i\partial_{\ddot{q}^i}
\end{equation}
leaves invariant the solution space $\mathcal E$: 
\begin{equation}\label{X.symm.cond}
  \pr\X_P(\ddot{q}^i - f^i) \big|_{\solnsp} = 0,
  \quad
  i=1,\ldots,N . 
\end{equation}
\end{defn}

The symmetry condition \eqref{X.symm.cond}
can be expressed more explicitly by means of
the time derivative \eqref{time.der.solnsp} on $\solnsp$.
When the action of a vertical vector field on a function $F(t,q,\dot{q})$
is restricted to $\solnsp$,
it is given by $\pr\X_P(F)|_\solnsp = P^i\partial_{q^i}F + \Dt P^i\partial_{\dot{q}^i}F$
which thereby defines the associated vector field
\begin{equation}\label{X.gen.solnsp}
  \X_P^\solnsp = P^i\partial_{q^i} + \Dt P^i\partial_{\dot{q}^i}
\end{equation}
in the solution space $\solnsp \subset J$.
Note that, here,
both $P^i$ and $\Dt P^i= P^i_t +\dot{q}^j P^i_{q^j} +f^j P^i_{\dot{q}^j}$
are functions only of $t$, $q^i$, and $\dot{q}^i$.

\begin{prop}
A vertical vector field \eqref{X.gen.alt} is an infinitesimal symmetry of the equations of motion
if and only if the associated vector field \eqref{X.gen.solnsp} satisfies 
\begin{equation}\label{X.symm.cond.alt}
  \Dt^2 P^i - \X_P^\solnsp(f^i) =0,
  \quad
  i=1,\ldots,N . 
\end{equation}
\end{prop}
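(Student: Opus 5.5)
The plan is to expand the left-hand side of the symmetry condition \eqref{X.symm.cond} directly and then restrict it to $\solnsp$ using the time derivative \eqref{time.der.solnsp}.

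By the prolongation formula in Definition~\ref{defn:symm},
\[
\pr\X_P(\ddot{q}^i - f^i) = \ddot{P}^i - P^j f^i_{q^j} - \dot{P}^j f^i_{\dot{q}^j},
\]
because $f^i$ depends only on $(t,q,\dot{q})$. Here $\dot{P}^j = D_tP^j$ and $\ddot{P}^i = D_t^2P^i$ are total derivatives on the prolonged jet space $J^{(1)}$.

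The key step is to express these total derivatives on $\solnsp$ through $\Dt$. Applying the chain-rule decomposition \eqref{DtC} to an arbitrary function $F(t,q,\dot{q})$ gives
\[
D_tF = \Dt F + (\ddot{q}^k - f^k)F_{\dot{q}^k},
\]
so that $D_tF|_\solnsp = \Dt F$, which is again a function of $(t,q,\dot{q})$ only. In particular $\dot{P}^j|_\solnsp = \Dt P^j$.

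For the second derivative, apply $D_t$ once more:
\[
D_t^2 P^i = D_t(\Dt P^i) + D_t\big((\ddot{q}^k - f^k)P^i_{\dot{q}^k}\big).
\]
The first term restricts to $\Dt^2 P^i$, since $\Dt P^i$ depends only on $(t,q,\dot{q})$. In the second term, the product rule produces $D_t(\ddot{q}^k - f^k)\,P^i_{\dot{q}^k}$ plus a term proportional to $\ddot{q}^k - f^k$. The latter vanishes on $\solnsp$.

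The main obstacle is the leftover factor $D_t(\ddot{q}^k - f^k) = \dddot{q}{}^k - D_tf^k$. This vanishes on the solution space only if $\solnsp$ is understood to include its differential consequence $\dddot{q}{}^k = D_tf^k$ in $J^{(1)}$. That is the standard convention, and it is why the paper works in $J^{(1)}$, so I would state explicitly that $\solnsp$ is taken together with its prolongation. With this convention, $\ddot{P}^i|_\solnsp = \Dt^2 P^i$. Equivalently, one can note that $\ddot{P}^i$ restricted to solutions is the second time derivative along solution curves, and $\Dt$ generates exactly that derivative.

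Substituting these restrictions gives
\[
\pr\X_P(\ddot{q}^i - f^i)\big|_\solnsp = \Dt^2 P^i - P^j f^i_{q^j} - \Dt P^j f^i_{\dot{q}^j} = \Dt^2 P^i - \X_P^\solnsp(f^i),
\]
using the definition \eqref{X.gen.solnsp}. Both sides are functions of $(t,q,\dot{q})$ alone, which are free coordinates on $\solnsp$. Hence the condition \eqref{X.symm.cond} holds if and only if \eqref{X.symm.cond.alt} holds identically, which proves both directions at once.
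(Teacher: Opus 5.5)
Your argument is correct and is essentially the paper's own route: the paper justifies the proposition by restricting $\pr\X_P$ to $\solnsp$, where $\dot P^i$ and $\ddot P^i$ become $\Dt P^i$ and $\Dt^2 P^i$, which is exactly your expansion of $\pr\X_P(\ddot{q}^i-f^i)$. You also state explicitly that $\solnsp$ must be taken together with its prolongation $\dddot{q}{}^k = D_t f^k$ in $J^{(1)}$, since otherwise the term $\dddot{q}{}^k P^i_{\dot{q}^k}$ would survive; this makes precise a convention the paper leaves implicit.
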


This result can be taken as an alternative definition of
an infinitesimal symmetry formulated in the solution space
directly in terms of the projected vector field \eqref{X.gen.solnsp}. 

When this vector field arises from an infinitesimal variational symmetry \eqref{inv.cond.alt2.gen},
it can be expressed explicitly in terms of the locally conserved quantity coming from 
the Noether correspondence, 
which yields the following corollary of Theorem~\ref{thm:noether}. 

\begin{cor}\label{cor:noether.symm}
For any locally conserved quantity $C(t,q,\dot{q})$,
there is a corresponding infinitesimal symmetry of the equations of motion \eqref{eom.gen}, 
\begin{equation}\label{varsym.solnsp.gen}
  \X^\solnsp_{(C)} = g^{-1}{}^{ij}Q_j \partial_{q^i} + \Dt(g^{-1}{}^{ij}Q_j)\partial_{\dot{q}^i},
  \quad
  Q_i = \partial_{\dot{q}^i} C . 
\end{equation}
\end{cor}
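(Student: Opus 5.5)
The plan is to chain Theorem~\ref{thm:noether} with the general fact, recalled just before Definition~\ref{defn:symm}, that variational symmetries are symmetries of the equations of motion. Then we restrict the resulting vector field to $\solnsp$ as in \eqref{X.gen.solnsp}.

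First, let $C(t,q,\dot q)$ be locally conserved. By Proposition~\ref{prop:conserved} it satisfies $\Dt C=0$, and off $\solnsp$ it obeys the multiplier relation \eqref{multr.eqn.gen} with $Q_i=\partial_{\dot q^i}C$. Theorem~\ref{thm:noether} then shows that $\X_P=P^i\partial_{q^i}$, with $P^i=g^{-1}{}^{ij}Q_j$ by \eqref{PfromC.gen}, is an infinitesimal variational symmetry. Concretely, $\pr\X_P(\Lagr)=D_tW$ with $W=P^i\partial_{\dot q^i}\Lagr-C$.

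Second, we show that $\X_P$ satisfies \eqref{X.symm.cond}. One route is the action-principle argument stated in the remark. To keep things self-contained, I would instead use the commutation identity
\[
\frac{\delta}{\delta q^i}\big(\pr\X_P(\Lagr)\big)=\pr\X_P\Big(\frac{\delta\Lagr}{\delta q^i}\Big)+\big(\text{Fréchet-adjoint terms}\big),
\]
where the extra terms are linear in $\delta\Lagr/\delta q^j$ and its total derivatives. The standard form is $E_i(\pr\X_P\Lagr)=\pr\X_P(E_i\Lagr)+D^*_{P,i}(E\Lagr)$, in which $D^*_P$ is the adjoint of the linearization of $P$.

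The left side vanishes by \eqref{inv.cond.alt.gen}. The adjoint terms involve $E_j(\Lagr)=g_{jk}(f^k-\ddot q^k)$ and its derivatives, so they vanish on $\solnsp$. Hence $\pr\X_P\big(g_{ij}(f^j-\ddot q^j)\big)|_\solnsp=0$.

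Expanding by the product rule, the term $\pr\X_P(g_{ij})(f^j-\ddot q^j)$ also vanishes on $\solnsp$. Since $g$ is invertible, what remains is $\pr\X_P(\ddot q^i-f^i)|_\solnsp=0$, which is exactly \eqref{X.symm.cond}.

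Third, we restrict to $\solnsp$. There $\dot P^i$ becomes $\Dt P^i$, so the associated vector field \eqref{X.gen.solnsp} is $P^i\partial_{q^i}+\Dt P^i\partial_{\dot q^i}$. Substituting $P^i=g^{-1}{}^{ij}Q_j$ gives \eqref{varsym.solnsp.gen}.

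The main obstacle is the second step. Its cleanest justification needs the commutator identity between $\pr\X_P$ and the Euler--Lagrange operator, and that identity takes some care when $P$ depends on $\dot q$, because the adjoint of the linearization then contains second-order terms. If it becomes cumbersome, I would fall back on the remark's action-principle argument: the variational symmetry maps extremals to extremals to first order in $\varepsilon$, and this is precisely the invariance of $\solnsp$.
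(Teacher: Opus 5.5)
Your proposal is correct and follows the same route as the paper. Theorem~\ref{thm:noether} gives the variational symmetry with $P^i=g^{-1}{}^{ij}Q_j$; a variational symmetry is a symmetry of the equations of motion; and restricting to $\solnsp$ via \eqref{X.gen.solnsp} yields \eqref{varsym.solnsp.gen}. The only difference is the middle step, where the paper just cites its action-principle remark and you prove it with the identity $E_i(\pr\X_P\Lagr)=\pr\X_P(E_i\Lagr)+(D_P^*E\Lagr)_i$, which holds for $\dot q$-dependent $P$ provided $\solnsp$ is read as the prolonged solution space with $\dddot{q}{}^i=\Dt f^i$, so that the $D_t E_j(\Lagr)$ terms also vanish.
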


\section{Dynamical symmetry groups}\label{sec:symmgroup}

Given an infinitesimal symmetry, 
it generates a one-parameter Lie group of symmetry transformations 
that maps the solution space of the equations of motion into itself. 
This mapping is given by the integral curve of the vertical vector field 
defining the infinitesimal symmetry,
which can be identified with the vector field \eqref{X.gen.solnsp} 
projected into the solution space $\solnsp$. 

The integral curve is obtained directly via the exponential mapping 
\begin{equation}\label{X.transformation.group}
  (q^i,\dot{q}^i)_\solnsp   \to  (q^i{}^*,\dot{q}^i{}^*)_\solnsp
  = \exp(\varepsilon\,\X_P^\solnsp) (q^i,\dot{q}^i)_\solnsp
\end{equation}
with parameter $\varepsilon$.
This defines a transformation group acting on the solution space $\solnsp$, 
where $\varepsilon=0$ yields the identity transformation
and where the underlying infinitesimal transformation is simply given by
the components of the vector field \eqref{X.gen.solnsp}: 
\begin{equation}
  (q^i{}^*,\dot{q}^i{}^*)_\solnsp
  = (q^i,\dot{q}^i)_\solnsp +\varepsilon \big(P^i(t,q,\dot{q})_\solnsp ,\Dt P^i(t,q,\dot{q})_\solnsp \big) + O(\varepsilon^2) . 
\end{equation}
An equivalent way to obtain the integral curve is by solving the ODE system
\begin{equation}\label{X.transformation.ODE}
  \frac{dq^i{}^*}{d\varepsilon} = \X_P^\solnsp (q^i){}^*,
  \quad
  \frac{d\dot{q}^i{}^*}{d\varepsilon} = \X_P^\solnsp (\dot{q}^i){}^*
\end{equation}
with initial conditions
$q^i{}^*|_{\varepsilon=0} = q^i$ and $\dot{q}^i{}^*|_{\varepsilon=0} = \dot{q}^i$. 

These symmetry transformations \eqref{X.transformation.group}
have a natural classification comprising two distinct types
\cite{BA-book}:
\emph{point symmetries} and \emph{dynamical symmetries}.
What distinguishes them is the dependence of the function $P^i$ on the variable $\dot{q}^i$. 

\begin{defn}\label{defn:point.vs.dyn}
A point symmetry generator $P^i_\text{pt.}(t,q,\dot{q})$ is
\emph{strictly linear in $\dot{q}^i$},
namely,
\begin{equation}\label{P.point}
  \dfrac{\partial P^i_\text{pt.}}{\partial \dot{q}^j} =-\tau(t,q) \delta_j{}^i
\end{equation}
where $\delta_j{}^i$ denotes the identity matrix. 
A dynamical symmetry generator $P^i_\text{dyn.}(t,q,\dot{q})$ is
\emph{either non-strictly linear or nonlinear in $\dot{q}^i$},
namely,
\begin{equation}\label{P.dyn}
  \dfrac{\partial P^i_\text{dyn.}}{\partial\dot{q}^j} = \sigma_j{}^i(t,q,\dot{q})
\end{equation}
with the matrix function $\sigma_j{}^i(t,q,\dot{q})$ satisfying
\begin{subequations}\label{P.dyn.cond}
\begin{equation}
  \parder{\sigma_j{}^i}{\dot{q}^k} =0,
  \quad 
  \sigma_j{}^i \neq -\tau  \delta_j{}^i , 
\end{equation}
or 
\begin{equation}
  \parder{\sigma_j{}^i}{\dot{q}^k} \neq 0 . 
\end{equation}
\end{subequations}
\end{defn}

The motivation for this classification comes from considering
the action of an infinitesimal symmetry on functions $(q^i(t),\dot{q}^i(t))$
which can be thought of as representing solution trajectories. 
Each solution geometrically describes a curve
in the coordinate space $(t,q^i,\dot{q}^i)$. 
Under a point transformation on $(t,q^i)$, given by 
\begin{equation}
  t \to t^\dagger = t + \varepsilon \tau(t,q) + O(\varepsilon^2),
  \quad
  q^i \to q^i{}^\dagger  = q^i + \varepsilon \eta^i(t,q) + O(\varepsilon^2)
\end{equation}
with
\begin{equation}
  \dot{q}^i \to \dot{q}^i{}^\dagger = \dot{q}^i + \varepsilon \eta_{(1)}^i(t,q,\dot{q}) + O(\varepsilon^2), 
  \quad
 \eta_{(1)}^i = \dot{\eta}^i -\dot{\tau}\dot{q}^i
\end{equation}
being the prolonged transformation on $(t,q^i,\dot{q}^i)$, 
a curve $(q^i(t),\dot{q}^i(t))$ is mapped to another curve
$(q^i{}^\dagger(t),\dot{q}^i{}^\dagger(t))$.
As illustrated in Fig.~1, 
this mapping of curves has a well-known equivalent formulation in which $t$ is kept fixed
and only $(q^i,\dot{q}^i)$ is changed \cite{BA-book,Olv-book}: 
\begin{equation}\label{point.transformation}
\begin{aligned}
  & q^i{}^*(t) = q^i(t) + \varepsilon \big(\eta^i(t,q(t)) - \tau(t,q(t))\,\dot{q}^i(t)\big) + O(\varepsilon^2),
  \\
  & \dot{q}^i{}^*(t) = \Dt q^i{}^*(t) =
  \dot{q}^i(t) + \varepsilon \big( \dot{\eta}^i(t,q(t)) - \dot{\tau}(t,q(t))\,\dot{q}^i(t) -\tau(t,q) f^i(t,q(t),\dot{q}(t)) \big) + O(\varepsilon^2) . 
\end{aligned}  
\end{equation}
Its corresponding generator is given by the vertical vector field 
\begin{equation}\label{point.X.gen}
  \X_\text{pt.}^\solnsp =   P_\text{pt.}^i \partial_{q^i} + \Dt P_\text{pt.}^i \partial_{\dot{q^i}}
\end{equation}
in the solution space, where
\begin{equation}
  P_\text{pt.}^i(t,q) = \eta^i(t,q) - \tau(t,q)\,\dot{q}^i
\end{equation}
has precisely the form satisfying the point symmetry definition \eqref{P.point}.

\begin{figure}[ht!]
\begin{tikzpicture}[scale=1.2]
\draw[->] (-1,0) -- (7,0) node[right] {$t$};
\draw[->] (0,-0.5) -- (0,3.5) node[left] {$q$};
\draw[thick] 
  (-1.2,1) .. controls (1.5,1.4) and (3,1.3) .. (5,2)
  node[right] {$q(t)$};
\draw[thick]
  (-1.2,3) .. controls (2,2.9) and (3.5,2.7) .. (5.8,3.3)
  node[right] {$q^{\dagger}(t)$};
\fill (2,1.35) circle (2pt);
\node[below right] at (2,1.35) {$(q,t)$};
\draw[dashed] (2,-0.5) -- (2,2.82);
\fill (2,2.87) circle (2pt);
\fill (3,2.89) circle (2pt);
\node[above right] at (3,2.89) {$(q^{\dagger},t^{\dagger})$};
\draw[->, thick] (2.1,1.45) .. controls (2.7,2.0) and (3,2.5) .. (3,2.8);
\draw[->, thick] (1.9,1.45) .. controls (1.7,1.9) and (1.7,2.5) .. (1.98,2.75);
\end{tikzpicture}
\caption{Point transformation of curves}
\end{figure}
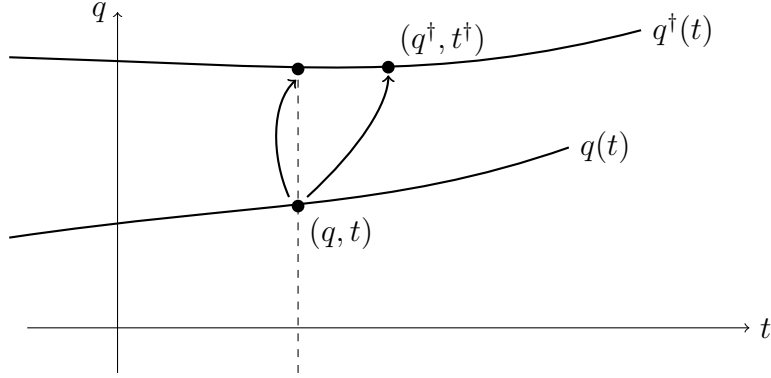

Therefore,
point symmetries arise from the restriction of
prolonged point transformation groups in the coordinate space $(t,q^i,\dot{q}^i)$
to the solution space $\solnsp$. 
By comparison,
dynamical symmetries cannot be expressed as the restriction to $\solnsp$ of
any transformation on the coordinate space $(t,q^i,\dot{q}^i)$.
Stated more generally, 
a dynamical symmetry cannot be lifted from $\solnsp$ to a transformation 
on spaces coordinatized by $t$, $q^i$, and time derivatives of $q^i$ up to a finite order
(cf. \cite{BCA-book}).

\subsection{Gauge freedom in symmetry generators} 

Lastly, 
an important observation is that any infinitesimal symmetry \eqref{X.gen.solnsp}
in the solution space $\solnsp$
can be extended to a vector field in the coordinate space $(t,q^i,\dot{q}^i)$. 
This extension is non-unique as it involves the following gauge freedom. 

Associate to the vertical vector field \eqref{X.gen.solnsp} 
a non-vertical vector field 
\begin{equation}\label{dyn.symm.Y}
  \Y_P = \X_P^\solnsp + \tau \Dt 
  = \tau\partial_{t} +\big(P^i + \tau\dot{q}^i\big)\partial_{q^i} + \big(\Dt P^i +\tau f^i\big)\partial_{\dot{q}^i}
\end{equation}
where $\tau$ is a completely arbitrary function of $t$, $q^i$, $\dot{q}^i$.
The integral curve of this vector field \eqref{dyn.symm.Y} in the solution space
yields the transformation group
\begin{equation}\label{dyn.symm.transformation} 
  (t,q^i,\dot{q}^i)_\solnsp \to  (t^\dagger,q^i{}^\dagger,\dot{q}^i{}^\dagger)_\solnsp
  = \exp(\varepsilon\,\Y) (t,q^i,\dot{q}^i)_\solnsp
\end{equation}
with parameter $\varepsilon$.

\begin{lem}\label{lem:Y.X}
The vector fields $\X_P^\solnsp$ and $\Y_P$ have the same action
on the solution space of the equations of motion.
In particular,
for any constant of motion $C(q,\dot{q})$, 
\begin{equation}\label{X.C.Y.C}
  \X_P^\solnsp(C)|_\solnsp = \Y_P(C)|_\solnsp . 
\end{equation}
\end{lem}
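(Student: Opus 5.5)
The plan is to compare the two vector fields directly through the decomposition \eqref{dyn.symm.Y}, $\Y_P = \X_P^\solnsp + \tau\Dt$, and to show that the extra term $\tau\Dt$ contributes nothing on the solution space. First I will verify this decomposition componentwise. Applying $\tau\Dt$, with $\Dt$ as in \eqref{time.der.solnsp}, to the coordinates $t$, $q^i$, $\dot{q}^i$ gives $\tau$, $\tau\dot{q}^i$, $\tau f^i$. Adding these to the components $0$, $P^i$, $\Dt P^i$ of \eqref{X.gen.solnsp} reproduces exactly the components listed in \eqref{dyn.symm.Y}. This step is bookkeeping.

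For the identity \eqref{X.C.Y.C}, I will apply both sides to a function $C$, which by linearity gives $\Y_P(C) = \X_P^\solnsp(C) + \tau\,\Dt C$. By Proposition~\ref{prop:conserved}, a locally conserved quantity satisfies $\Dt C = 0$ identically on $\solnsp$. For a constant of motion $C(q,\dot{q})$ this reads $\dot{q}^i C_{q^i} + f^i C_{\dot{q}^i} = 0$. The second term therefore vanishes for every choice of the gauge function $\tau(t,q,\dot{q})$, which yields \eqref{X.C.Y.C}. The same argument applies verbatim to an integral of motion $C(t,q,\dot{q})$, since $\Dt$ includes $\partial_t$.

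For the broader claim that the two fields have the same action on the solution space, I will argue geometrically. The field $\tau\Dt$ is everywhere tangent to the solution trajectories in $(t,q^i,\dot{q}^i)$, because $\Dt$ generates the flow of the equations of motion \eqref{eom.gen}. Its flow therefore only slides points along each trajectory and maps every solution curve to itself. It follows that $\exp(\varepsilon\Y_P)$ and $\exp(\varepsilon\X_P^\solnsp)$ carry a given solution curve to the same image curve, differing only in how that curve is parametrized. This is the same picture as the passage from $(t^\dagger,q^\dagger)$ to the fixed-$t$ form \eqref{point.transformation}.

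The main obstacle is making this finite-level statement precise, since $\X_P^\solnsp$ and $\tau\Dt$ need not commute, so one cannot simply factor the exponential. I would first try an infinitesimal argument. The set of solution curves, viewed as a foliation by integral curves of $\Dt$, is preserved by a vector field exactly when its bracket with $\Dt$ is proportional to $\Dt$. The bracket $[\tau\Dt,\Dt] = -(\Dt\tau)\Dt$ has this form, so adding $\tau\Dt$ does not change the induced flow on the space of trajectories. Equivalently, and perhaps more cleanly, I will phrase the action on trajectories through their labels, namely the $2N$ locally conserved integrals of Theorem~\ref{thm:conserved.integrals}. Those labels transform identically under both fields by the computation of the previous step, and this reduces the geometric claim to \eqref{X.C.Y.C}.
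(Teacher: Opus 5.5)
Your proposal is correct, but it is organized differently from the paper's proof. The paper works entirely at the level of transformations. It computes the flow of $\Dt$ and converts it to fixed-$t$ form as in \eqref{point.transformation}. It observes that the $O(\varepsilon)$ term vanishes and asserts that the higher-order terms must then vanish too. From this it concludes that $\tau\Dt$ acts trivially on solutions, and only then reads off \eqref{X.C.Y.C} as a special case.

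You instead prove \eqref{X.C.Y.C} first, via the one-line identity $\Y_P(C)=\X_P^\solnsp(C)+\tau\,\Dt C$ together with Proposition~\ref{prop:conserved}. This is more direct, and it also shows that the restriction to constants of motion is unnecessary. You then handle the finite-level statement by projecting onto the space of trajectories. That confronts the non-commutativity of $\X_P^\solnsp$ and $\tau\Dt$, which the paper's final ``thereby'' passes over.

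The one ingredient you should state explicitly is $[\X_P^\solnsp,\Dt]=0$. A direct computation gives $[\X_P^\solnsp,\Dt]=\big(\X_P^\solnsp(f^i)-\Dt^2P^i\big)\partial_{\dot{q}^i}$, which vanishes precisely by the symmetry condition \eqref{X.symm.cond.alt}. This fact does two jobs in your argument:
\begin{itemize}
\item It is what makes $\X_P^\solnsp$ project to the trajectory space at all. Hence $\Y_P$ projects as well, since $[\Y_P,\Dt]=-(\Dt\tau)\Dt$.
\item In your label formulation, it guarantees $\Dt(\X_P^\solnsp C_a)=\X_P^\solnsp(\Dt C_a)=0$. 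So $\X_P^\solnsp(C_a)=\Phi_a(C_1,\ldots,C_{2N})$ locally, and the labels along both flows satisfy the same closed system $dC_a/d\varepsilon=\Phi_a(C)$ with the same initial data.
\end{itemize}
Without this closure, agreement of the infinitesimal actions on the labels would not by itself give agreement of the finite flows.
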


\begin{proof}
Consider just the vector field $\Dt$.
Its integral curve 
$(t,q^i,\dot{q}^i)_\solnsp \to  (t^\dagger,q^i{}^\dagger,\dot{q}^i{}^\dagger)_\solnsp
= \exp(\varepsilon\,\Dt) (t,q^i,\dot{q}^i)_\solnsp$
generates the transformations 
\begin{equation}
  t \to t^\dagger = t + \varepsilon, 
  \quad
  q^i \to q^i{}^\dagger  = q^i + \varepsilon \dot{q}^i + O(\varepsilon^2),
  \quad
  \dot{q}^i \to \dot{q}^i{}^\dagger = \dot{q}^i + \varepsilon f^i(t,q,\dot{q}) + O(\varepsilon^2) . 
\end{equation}
When acting on solutions,
the equivalent transformations with $t$ held fixed
are given by
$q^i{}^\dagger(t) = q^i(t) +O(\varepsilon^2)$
and $\dot{q}^i{}^\dagger(t) = \Dt q^i{}^\dagger(t) = \dot{q}^i(t)+O(\varepsilon^2)$
where the $O(\varepsilon)$ terms vanish, 
similarly to the computation giving the action of the point transformations \eqref{point.transformation}.
This implies that all higher order terms vanish, since they are determined uniquely
by the $O(\varepsilon)$ terms.  
Thus, $\Y = \Dt$ viewed as a symmetry vector field acts trivially on solutions,
and likewise for $\Y= \tau\Dt$.
Thereby, $\Y_P$ and $\X^\solnsp_P$ have the same action solutions
and in particular on any constant of motion. 
\end{proof}

Consequently, $\tau$ constitutes a \emph{gauge freedom}
in extending an infinitesimal symmetry $\X_P^\solnsp$ in $\solnsp$
to an equivalent symmetry vector field in the coordinate space $(t,q^i,\dot{q}^i)$. 
Existence of this gauge freedom has not been widely recognized or utilized in the literature on dynamical systems. 
It will turn out to enable finding an explicit form of
the symmetry transformation group for an infinitesimal dynamical symmetry,
alternatively to attempting to evaluate the exponential mapping expression \eqref{X.transformation.group} directly. 

\begin{cor}\label{cor:repr.X.transformation.group}
For an infinitesimal symmetry $\X_P^\solnsp$,
the transformation group \eqref{dyn.symm.transformation} for any choice of $\tau(t,q,\dot{q})$
provides a representation of the symmetry transformation group \eqref{X.transformation.group} generated by $\X_P^\solnsp$. 
\end{cor}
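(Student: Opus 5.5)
The plan is to read Corollary~\ref{cor:repr.X.transformation.group} off Lemma~\ref{lem:Y.X}, made precise at the level of finite transformations. I will show that the flows of $\Y_P$ and $\X_P^\solnsp$ induce the same map on solution trajectories. Here a solution is identified with its curve $(t,q^i(t),\dot q^i(t))$, equivalently with the values of its $2N$ locally conserved integrals from Theorem~\ref{thm:conserved.integrals}.

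\textbf{Step 1: reduce to the action on conserved integrals.} Take the $2N$ functionally independent local integrals $C_1,\ldots,C_{2N}$ of Theorem~\ref{thm:conserved.integrals}. They coordinatize the space of solution trajectories, so any transformation group on $\solnsp$ that maps solutions to solutions is determined, as a map of trajectories, by its induced action on the $C_a$.

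\textbf{Step 2: compare the infinitesimal actions.} Since $\Dt C_a = 0$ by Proposition~\ref{prop:conserved}, we get
\[
\Y_P(C_a) = \X_P^\solnsp(C_a) + \tau\,\Dt C_a = \X_P^\solnsp(C_a)
\]
for every $a$ and every function $\tau(t,q,\dot q)$. This is the content of Lemma~\ref{lem:Y.X}, now extended from constants of motion to all integrals $C_a(t,q,\dot q)$, since the $\partial_t$ component of $\Y_P$ is absorbed into $\tau\Dt$. Write $\Phi_a := \X_P^\solnsp(C_a)$.

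\textbf{Step 3: show that $\Phi_a$ is itself conserved.} I need $\Phi_a$ to be a function of the $C_b$ alone. This follows because $\X_P^\solnsp$ is a symmetry, so it commutes with $\Dt$ on $\solnsp$: by \eqref{X.symm.cond.alt} one has $[\X_P^\solnsp,\Dt]=0$ modulo $\Dt$. Hence $\Dt\Phi_a=0$, so $\Phi_a = \Phi_a(C_1,\ldots,C_{2N})$. In the induced coordinates $C_a$, both vector fields therefore reduce to the same field $\Phi_a(C)\,\partial_{C_a}$.

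\textbf{Step 4: integrate.} The flows $\exp(\varepsilon\Y_P)$ and $\exp(\varepsilon\X_P^\solnsp)$ induce flows on the $C_a$ that solve the same ODE system $dC_a^*/d\varepsilon = \Phi_a(C^*)$ with the same initial data. By ODE uniqueness they coincide. So each point transformed by \eqref{dyn.symm.transformation} lies on the same transformed trajectory as its image under \eqref{X.transformation.group}; the two differ only by sliding along that trajectory, which is the trivial action of $\tau\Dt$ found in the proof of Lemma~\ref{lem:Y.X}.

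Group composition is preserved, since both are one-parameter flows with the same induced generator. Hence \eqref{dyn.symm.transformation} represents the group generated by $\X_P^\solnsp$, for any $\tau$.

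\textbf{Main obstacle.} The delicate point is Step 3, the commutation $[\X_P^\solnsp,\Dt]\propto\Dt$ for a general dynamical symmetry with $\tau$ depending on $\dot q$. If that computation proves awkward, I would instead argue directly from the finite flow. Reparametrize $\varepsilon$-dependent points along each trajectory, using the fact that $\tau\Dt$ is tangent to trajectories, and show that the flow of $\X+\tau\Dt$ equals the flow of $\X$ composed with a trajectory-preserving time shift.
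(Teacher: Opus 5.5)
Your proof is correct, but it follows a different route from the paper. The paper gives no separate argument: the corollary is read off Lemma~\ref{lem:Y.X}. That proof rewrites the flow of $\Dt$ (and ``likewise'' of $\tau\Dt$) as an equivalent transformation at fixed $t$, as in \eqref{point.transformation}. It observes that the $O(\varepsilon)$ term vanishes and concludes that $\tau\Dt$ acts trivially on solutions, so $\Y_P$ and $\X_P^\solnsp$ have the same action.

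You instead work on the space of trajectories, coordinatized by the $2N$ integrals of Theorem~\ref{thm:conserved.integrals}. Both fields induce the same vector field $\Phi_a(C)\,\partial_{C_a}$ there, and ODE uniqueness identifies the induced finite maps.

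What your route buys is a genuine statement about finite transformations. The paper argues at the level of generators. It leaves implicit the passage from ``$\tau\Dt$ acts trivially'' to equality of the induced flows of $\X_P^\solnsp$ and $\X_P^\solnsp+\tau\Dt$. That passage is not automatic, since $[\X_P^\solnsp,\tau\Dt]=\X_P^\solnsp(\tau)\,\Dt\neq0$ in general. Your version also makes explicit that the two flows differ pointwise only by a slide along each trajectory. As a by-product, it shows that the flow of $\Y_P$ maps trajectories to trajectories. The price is that you need Theorem~\ref{thm:conserved.integrals} as input, whereas the paper's observation is short and direct.

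Finally, the step you flagged as the main obstacle is not one. By \eqref{X.symm.cond.alt}, $[\X_P^\solnsp,\Dt]$ annihilates $t$, $q^i$ and $\dot q^i$; the last holds because $\X_P^\solnsp(f^i)-\Dt^2P^i=0$. So the commutator vanishes identically, not merely modulo $\Dt$, and $\tau$ plays no role in it. Your fallback argument is therefore unnecessary.
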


\section{Poisson brackets and action of symmetries on conserved quantities}\label{sec:PB.symmaction}

The set of all locally conserved quantities $C(t,q,\dot{q})$ of
a dynamical system \eqref{EL.eqn.gen}
is well known to possess several interrelated structures.
Firstly,
this set is a linear space,
since any linear combination of locally conserved quantities
is again a locally conserved quantity.
An arbitrary (smooth) function of any locally conserved quantity is also 
locally conserved.
Secondly,
as a linear space it carries an action of all symmetries of the equations of motion,
since any symmetry maps the solution space into itself,
whether the symmetry is variational or not. 
This linear space further comprises a Lie algebra,
with the commutator of two locally conserved quantities
being given by their Poisson bracket. 
Finally,
through the modern form of Noether's theorem (cf Theorem~\ref{thm:noether}), 
there is an homomorphism between this Poisson bracket Lie algebra 
and the variational symmetry Lie algebra 
consisting of the set of all infinitesimal symmetries,
where the Lie bracket is given by the commutator.

\subsection{Poisson bracket}

The Poisson bracket is conventionally defined in terms of
the phase space variables $(q^i,p_i)$
where 
\begin{equation}\label{Ham.p}
  p_i = \parder{L}{\dot{q}^i}
\end{equation}
defines the canonical momenta variables.
Changing variables from $(q^i,\dot{q}^i)$ to $(q^i,p_i)$
involves inverting the relation \eqref{Ham.p} to obtain
$\dot{q}^i$ in terms of $(q^i,p_j)$,
which is ensured by the non-degeneracy assumption $\det(g)\neq0$
on the Lagrangian $\Lagr(t,q,\dot{q})$ via the Hessian matrix \eqref{g.matrix}.
Under this transformation,
the Lagrangian equations of motion \eqref{eom.gen}
are equivalent to the Hamiltonian equations
\begin{equation}\label{Ham.eom.gen}
  \dot{q}^i = \parder{\Ham}{p_i},
  \quad
  \dot{p}_i = -\parder{\Ham}{q^i}
\end{equation}
where the Hamiltonian $\Ham$ is given by the Legendre transformation 
\begin{equation}\label{Ham.gen}
  \Ham(t,p,q) = \big( p_i \dot{q}^i  - \Lagr(t,q,\dot{q}) \big)\big|_{\dot{q}=\dot{q}(q,p)} . 
\end{equation}

It follows that the space of solutions $(q^i(t),p_i(t))$ of Hamilton's equations \eqref{Ham.eom.gen}
can be identified with the solution space $\solnsp$ of the Lagrangian equations of motion \eqref{eom.gen}.
This implies that a one-to-one correspondence holds between 
conserved quantities in the Lagrangian framework and the Hamiltonian framework. 

\begin{defn}\label{defn:conserved.Ham}
a function $\tilde C(t,q,p)$ on phase space is a \emph{locally conserved integral} (or invariant)
if
\begin{equation}
  \dfrac{d}{dt} \tilde C(t,q(t),p(t))\big|_{\solnsp} =0 
\end{equation}
holds piecewise in $t$ for every solution $(q^i(t),p_i(t))$ of Hamilton's equations \eqref{Ham.eom.gen}.
If for all solutions the conserved integral is continuous for all times $t$,
then it is \emph{globally conserved}. 
\end{defn}

This is the Hamiltonian counterpart of Definition~\ref{defn:conserved}.
The Legendre relation 
\begin{equation}\label{Legendre.conserved}
  \tilde C(t,q,p) = \tilde C(t,q,\partial_{\dot{q}}\Lagr) = C(t,q,\dot{q})
\end{equation}
gives the explicit correspondence between Lagrangian conserved integrals $C(t,q,\dot{q})$
and Hamiltonian conserved integrals $\tilde C(t,q,p)$. 

For any two functions $F_1(t,q,p)$ and $F_2(t,q,p)$ on phase space, 
their Poisson bracket is defined by 
\begin{equation}\label{Ham.PB}
  \{F_1,F_2\} = \parder{F_1}{q^i} \parder{F_2}{p_i} - \parder{F_2}{q^i} \parder{F_1}{p_i}
\end{equation}
which is bi-linear, antisymmetric, and obeys the Leibniz rule and the Jacobi identity.
This bracket is also non-degenerate, namely if 
$\{ F,G \}=0$ holds for all functions $G(t,q,p)$, then $F(t,q,p)$ is identically zero. 

As is well-known,
Hamilton's equations \eqref{Ham.eom.gen} have the Poisson bracket formulation
\begin{equation}\label{PB.eom.gen}
  \dot{q}^i = \{ \dot{q}^i, \Ham \}, 
  \quad
  \dot{p}_i = \{ p_i, \Ham \} . 
\end{equation}
Likewise,
the time evolution of any function $F(t,q,p)$ on phase space is given by 
\begin{equation}
  \dot{F} = \parder{F}{t} + \{ F, \Ham \} , 
\end{equation}
which follows directly from the change rule combined with Hamilton's equations.
Local conservation of an invariant $\tilde C(t,q,p)$ is consequently expressed as 
\begin{equation}\label{local.conserved.Ham}
  \parder{\tilde C}{t} + \{ \tilde C, \Ham \} =0 . 
\end{equation}
By means of the Legendre relation \eqref{Legendre.conserved},
the same condition holds for local conservation of $C(t,q,\dot{q})$. 
A further useful step is to formulation this condition
purely in terms of Lagrangian variables. 

\begin{lem}
Under a change of phase space variables to Lagrangian variables,
the Poisson bracket of any two functions $F_1$ and $F_2$ of $t$, $q$, $\dot{q}$
is given by
\begin{equation}\label{Lagr.PB}
  \{F_1,F_2\}  = \nabla^\t F_1\, \mathbf{J}\, \nabla F_2
  =   g^{-1}{}^{ij}  \Big( \parder{F_1}{q^i}\parder{F_2}{\dot{q}^j} -\parder{F_2}{q^i}\parder{F_1}{\dot{q}^j} \Big)
  +  c^{ij}  \parder{F_1}{\dot{q}^i}\parder{F_2}{\dot{q}^j} 
\end{equation}
where 
\begin{equation}\label{h.matrix}
  c^{ij} = - c^{ji} = g^{-1}{}^{ik}g^{-1}{}^{jl} (h_{ij} - h_{ji}),
  \quad
  h_{ij} = \parders{\Lagr}{q^i}{\dot{q}^j} . 
\end{equation}
Here $\nabla = \begin{pmatrix} \partial_{q^i} \\ \partial_{\dot{q}^i} \end{pmatrix}$
denotes the gradient in Lagrangian coordinates;
$\t$ denotes the transpose;
and 
\begin{equation}\label{PB.symplectic}
\mathbf{J} = 
\begin{pmatrix}
  0 & g^{-1}{}^{ij} \\ - g^{-1}{}^{ij} & c^{ij}
\end{pmatrix}
\end{equation}
denotes the symplectic matrix which is skew and obeys the Jacobi identity. 
\end{lem}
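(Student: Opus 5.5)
The plan is to obtain \eqref{Lagr.PB} by pulling the canonical bracket \eqref{Ham.PB} back along the Legendre map
\[
(t,q^i,\dot{q}^i)\mapsto (t,q^i,p_i),\qquad p_i=\partial_{\dot{q}^i}\Lagr .
\]
Here $t$ is treated as a parameter throughout. This map is a local diffeomorphism because $\det(g)\neq 0$. Any function $F(t,q,\dot{q})$ is identified with $\tilde F(t,q,p)$ through $F(t,q,\dot{q})=\tilde F(t,q,\partial_{\dot{q}}\Lagr)$, exactly as in the Legendre relation \eqref{Legendre.conserved}. The bracket of $F_1,F_2$ is then defined as $\{\tilde F_1,\tilde F_2\}$, expressed back in Lagrangian variables.

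The first step is the chain rule. Differentiating the identity $F=\tilde F(t,q,\partial_{\dot{q}}\Lagr)$ with respect to $\dot{q}^j$ and $q^i$ gives
\[
\parder{F}{\dot{q}^j} = g_{jk}\parder{\tilde F}{p_k},\qquad
\parder{F}{q^i} = \parder{\tilde F}{q^i} + h_{ij}\parder{\tilde F}{p_j},
\]
with $h_{ij}=\parders{\Lagr}{q^i}{\dot{q}^j}$. Inverting these, using the inverse Hessian, yields
\[
\parder{\tilde F}{p_k}=g^{-1}{}^{kj}\parder{F}{\dot{q}^j},\qquad
\parder{\tilde F}{q^i}=\parder{F}{q^i}-h_{ij}g^{-1}{}^{jk}\parder{F}{\dot{q}^k}.
\]

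The second step is to substitute these expressions into \eqref{Ham.PB}, which produces two groups of terms.
\begin{itemize}
\item The terms free of $h$ give $g^{-1}{}^{ij}\big(\partial_{q^i}F_1\,\partial_{\dot{q}^j}F_2-\partial_{q^i}F_2\,\partial_{\dot{q}^j}F_1\big)$, which is the first term of \eqref{Lagr.PB}.
\item The terms involving $h$ are both quadratic in $\dot{q}$-derivatives. After relabelling dummy indices, the coefficient of $\partial_{\dot{q}^a}F_1\,\partial_{\dot{q}^b}F_2$ should come out as
\[
g^{-1}{}^{ia}g^{-1}{}^{jb}(h_{ij}-h_{ji}),
\]
which is manifestly antisymmetric in $(a,b)$. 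This is the coefficient $c^{ab}$ in \eqref{h.matrix}, reading the free indices there as $a,b$ with $i,j$ summed.
\end{itemize}
Collecting both groups and writing the result as $\nabla^\t F_1\,\mathbf{J}\,\nabla F_2$ then gives the matrix \eqref{PB.symplectic}.

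The third step is the structural claims about $\mathbf{J}$. Skew-symmetry is immediate from the block form, since $g^{-1}$ is symmetric and $c$ is antisymmetric. I would not verify the Jacobi identity directly from $\mathbf{J}$. Instead it follows because the bracket is, by construction, the canonical bracket transported by a diffeomorphism, so bilinearity, the Leibniz rule and Jacobi all carry over. If a direct check is wanted, the natural route is to observe that $\mathbf{J}^{-1}$ is the matrix of the closed 2-form $d(\partial_{\dot{q}^i}\Lagr)\wedge dq^i$, whose closedness is equivalent to Jacobi.

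The only real obstacle is index bookkeeping in the cross terms: the relabelling must be done carefully so that the difference $h_{ij}-h_{ji}$ emerges, rather than two uncancelled copies of $h$. Everything else is the routine chain rule.
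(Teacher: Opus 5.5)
Your proposal is correct and takes the same route as the paper. The paper also writes down the Jacobian of $(q^k,\dot{q}^l)\mapsto(q^i,p_j)$ and its inverse, applies the chain rule to the canonical bracket, and inherits skewness and the Jacobi identity from the canonical bracket. Your explicit coefficient $g^{-1}{}^{ia}g^{-1}{}^{jb}(h_{ij}-h_{ji})$ is also the right way to read the index-clashing formula for $c^{ij}$ in the statement.
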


\begin{proof}
The Jacobian of the change of variables is the matrix 
\begin{equation}
  \dfrac{\partial(q^i,p_j)}{\partial(q^k,\dot{q}^l}
  = \begin{pmatrix} \delta_k{}^i & h_{kj} \\ 0 & g_{jl} \end{pmatrix} , 
\end{equation}
which has the inverse 
\begin{equation}
  \dfrac{\partial(q^k,\dot{q}^l)}{\partial(q^i,p_j}
  = \begin{pmatrix} \delta_i{}^k & -h_{im} g^{-1}{}^{lm} \\ 0 & g^{-1}{}^{lj} \end{pmatrix} . 
\end{equation}
Use of the chain rule given by the latter matrix applied to the phase space bracket \eqref{Ham.PB} directly yields the bracket \eqref{Lagr.PB}--\eqref{h.matrix}.

The properties of the symplectic matrix hold due to the corresponding properties
satisfied by the Poisson bracket \eqref{Ham.PB}. 
\end{proof}
        
This result more generally enables
carrying over the Hamiltonian Poisson bracket to the Lagrangian framework.
From a Lagrangian perspective,
this bracket \eqref{Lagr.PB} is an intrinsic coordinate-invariant structure. 
In particular, 
it is preserved under an arbitrary change of variables
$t\to \tilde t(t,q)$, $q^i\to \tilde q^i(t,q)$. 

As an immediate application of the Lagrangian Poisson bracket, 
the condition \eqref{local.conserved} for local conservation of a first integral
$C(t,q,\dot{q})$
has the direct reformulation
\begin{equation}\label{local.conserved.PB}
 \parder{C}{t} + \{ C, \tilde\Ham \} =0 
\end{equation}
where 
\begin{equation}
  \tilde\Ham(t,q,\dot{q}) = \dot{q}^i \parder{\Lagr(t,q,\dot{q})}{\dot{q}^i} - \Lagr(t,q,\dot{q}) 
\end{equation}
is the Hamiltonian \eqref{Ham.gen} expressed in Lagrangian variables.
Note that this condition \eqref{local.conserved.PB}
is equivalent to the conservation equation \eqref{local.conserved.Ham}
but has the advantage that the Legendre relation \eqref{Legendre.conserved} is not needed.

\subsection{Action of variational symmetries on conserved integrals}

Now consider the action of an infinitesimal variational symmetry \eqref{inv.cond.alt2.gen}
on a function of the Lagrangian variables. 
This action has a very natural formulation
in terms of the Lagrangian Poisson bracket \eqref{Lagr.PB}.

\begin{thm}\label{thm:X.PB.C.F} 
Let $C(t,q,\dot{q})$ be a locally conserved integral.
The corresponding infinitesimal variational symmetry \eqref{varsym.solnsp.gen}
projected into the solution space $\solnsp$
has the action 
\begin{equation}\label{X.PB.C.F}
  \X_{(C)}^\solnsp\rfloor dF = \{ F, C \}
\end{equation}
holding for any function $F(t,q,\dot{q})$. 
\end{thm}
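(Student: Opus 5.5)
The plan is a direct component computation in Lagrangian coordinates. By Corollary~\ref{cor:noether.symm}, $\X_{(C)}^\solnsp\rfloor dF = P^i F_{q^i} + (\Dt P^i) F_{\dot q^i}$ with $P^i = g^{-1}{}^{ij}C_{\dot q^j}$. The Lagrangian bracket \eqref{Lagr.PB} gives
\[
\{F,C\} = g^{-1}{}^{ij}C_{\dot q^j}F_{q^i} + \big(-g^{-1}{}^{ij}C_{q^i} + c^{jk}C_{\dot q^k}\big)F_{\dot q^j} .
\]
The coefficients of $F_{q^i}$ agree at once. Since $F$ is arbitrary, the theorem reduces to the single identity
\[
\Dt\big(g^{-1}{}^{jk}C_{\dot q^k}\big) = -g^{-1}{}^{ji}C_{q^i} + c^{jk}C_{\dot q^k} .
\]
This identity must be derived from $\Dt C=0$ (Proposition~\ref{prop:conserved}) together with the structure of $f^i$ in \eqref{f.eom}. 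Here I read $c^{ij}$ as $g^{-1}{}^{ik}g^{-1}{}^{jl}(h_{kl}-h_{lk})$, which is evidently what \eqref{h.matrix} intends. The sign convention should be confirmed against the Jacobian computation in the preceding lemma.

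\textbf{Step 1.} Differentiate $\Dt C=0$ with respect to $\dot q^k$, using the commutator $[\partial_{\dot q^k},\Dt] = \partial_{q^k} + f^i_{\dot q^k}\partial_{\dot q^i}$. This gives
\[
\Dt(C_{\dot q^k}) = -C_{q^k} - f^i_{\dot q^k}C_{\dot q^i} .
\]

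\textbf{Step 2.} Expand $\Dt(g^{-1}Q)$ with $Q_k = C_{\dot q^k}$, using $\Dt g^{-1} = -g^{-1}(\Dt g)g^{-1}$. Then
\[
\Dt(g^{-1}{}^{jk}Q_k) = -g^{-1}{}^{jk}C_{q^k} - g^{-1}{}^{jl}\big(f^i_{\dot q^l}g_{ik} + \Dt g_{lk}\big)g^{-1}{}^{km}Q_m .
\]
Comparing with the target, it remains to show the key matrix identity
\[
g_{ki}f^i_{\dot q^l} + \Dt g_{kl} = h_{kl} - h_{lk}
\]
(up to the index/sign convention fixed above), since then the last term is exactly $c^{jm}Q_m$.

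\textbf{Step 3.} Prove the key matrix identity, which is the main obstacle (a bookkeeping one). Write \eqref{f.eom} as $g_{kl}f^l = \Lagr_{q^k} - \Lagr_{t\dot q^k} - \dot q^m h_{mk}$ and differentiate in $\dot q^l$. This yields
\[
g_{ki,\dot q^l}f^i + g_{ki}f^i_{\dot q^l} = h_{kl} - h_{lk} - \Lagr_{t\dot q^k\dot q^l} - \dot q^m\Lagr_{q^m\dot q^k\dot q^l} .
\]
On the other hand,
\[
\Dt g_{kl} = \Lagr_{t\dot q^k\dot q^l} + \dot q^m\Lagr_{q^m\dot q^k\dot q^l} + f^i g_{kl,\dot q^i} .
\]
Adding the two, the third-derivative terms cancel. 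The terms $g_{ki,\dot q^l}f^i$ and $f^i g_{kl,\dot q^i}$ also cancel, by total symmetry of $\Lagr_{\dot q\dot q\dot q}$. This leaves exactly the antisymmetric matrix $h_{kl}-h_{lk}$.

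The only delicate point is tracking the index order and sign of $h$ against the definition of $c^{ij}$. I would fix these by checking the case $\Lagr = \tfrac12|\dot q|^2 + A_i(q)\dot q^i$ (a magnetic-type term), where both sides are easily computed. Once the identity holds, \eqref{X.PB.C.F} follows for all $F$.
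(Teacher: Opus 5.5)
Your proposal is correct and follows essentially the same route as the paper. Both reduce the claim to the coefficient of $\partial F/\partial\dot q^j$, expand $\Dt(g^{-1}{}^{jk}C_{\dot q^k})$ using the $\dot q$-derivative of $\Dt C=0$, and kill the residual matrix (the paper's $S_{kl}$, which equals $h_{kl}-h_{lk}-g_{ik}f^i_{\dot q^l}-\Dt g_{kl}$) by differentiating \eqref{f.eom} and using the total symmetry of third derivatives of $\Lagr$. Your reading $c^{ij}=g^{-1}{}^{ik}g^{-1}{}^{jl}(h_{kl}-h_{lk})$ with $h_{kl}=\partial^2\Lagr/\partial q^k\partial\dot q^l$ is exactly what the Jacobian change of variables produces, so the signs in your key identity are already right and no separate check is needed.
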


\begin{proof}
Write the infinitesimal symmetry \eqref{varsym.solnsp.gen}
as $X_{(C)}^\solnsp = P^i \partial_{q^i} + \Dt P^i\partial_{\dot{q}^i}$.
Its components are given by 
\begin{equation}
  P^i = g^{-1}{}^{ij} \parder{C}{\dot{q}^j}
\end{equation}
and 
\begin{equation}
  \Dt{P}^i =
  \Dt g^{-1}{}^{ij} \parder{C}{\dot{q}^j}
  - g^{-1}{}^{ij} \Big( \parder{C}{q^j} + \parder{f^k}{\dot{q}^j}\parder{C}{\dot{q}^k}  \Big) 
\end{equation}
using the time-derivative relation \eqref{DtC} 
combined with the identity 
\begin{equation}
  \Big[ \frac{d}{dt}, \parder{}{\dot{q}^j} \Big] = - \parder{}{q^j} . 
\end{equation}
Hence, the left-hand side of equation \eqref{X.PB.C.F} is the expression
\begin{equation}
  \begin{aligned}
  \X_{(C)}^\solnsp\rfloor dF
  & = P^i\parder{F}{q^i} + \Dt P^i \parder{F}{\dot{q}^i} \\
  &  = g^{-1}{}^{ij}\Big(
    \parder{F}{q^i} \parder{C}{\dot{q}^j}
     - \parder{C}{q^i}  \parder{F}{\dot{q}^j}
     - \parder{f^k}{\dot{q}^j}\parder{C}{\dot{q}^k} \parder{F}{\dot{q}^i} \Big) 
    + \Dt g^{-1}{}^{ij} \parder{F}{\dot{q}^i} \parder{C}{\dot{q}^j} , 
  \end{aligned}
\end{equation}
which has used the symmetry of $g^{-1}{}^{ij}$,
while the right-hand side has the explicit form 
\begin{equation}
  \{ F,C \} = -\{ C,F \}
  =   g^{-1}{}^{ij}\Big( \parder{F}{q^i}\parder{C}{\dot{q}^j} -\parder{C}{q^i}\parder{F}{\dot{q}^j} \Big)
  +  c^{ij}  \parder{F}{\dot{q}^i}\parder{C}{\dot{q}^j} . 
  \end{equation}
Combining these expressions, and again using the symmetry of $g^{-1}{}^{ij}$,
yields 
\begin{equation}
  \{ C, F \}  + \X_{(C)}^\solnsp\rfloor dF =
  g^{-1}{}^{ki} g^{-1}{}^{lj} S_{kl}   \parder{C}{\dot{q}^i} \parder{F}{\dot{q}^j}
\end{equation}
where
\begin{equation}\label{S.expr}
  S_{kl}   =  g_{ki}g_{lj}\Dt g^{-1}{}^{ij} - g_{ik}\parder{f^i}{\dot{q}^l} +  h_{kl} - h_{lk}
\end{equation}
from equation \eqref{h.matrix}.

The first term in expression \eqref{S.expr} can be expanded to get 
\begin{equation}
g_{ki}g_{lj}\Dt g^{-1}{}^{ij}
= -\Dt g_{kl}
= -\parder{g_{kl}}{t}  -\dot{q}^j \parder{g_{kl}}{q^j} - f^j \parder{g_{kl}}{\dot{q}^j} .
\end{equation}
Substituting expression \eqref{f.eom} for $f^i$
into the second term in expression \eqref{S.expr}
gives
\begin{equation}
  g_{ik}\parder{f^i}{\dot{q}^l}
  = \parder{}{\dot{q}^l}\Big( \parder{L}{q^k} - \parders{\Lagr}{t}{\dot{q}^k}   -\dot{q}^j h_{jk} \Big)
  -f^i \parder{g_{ik}}{\dot{q}^l}
  = h_{kl} -h_{lk}   -\dot{q}^j \parder{h_{jk}}{\dot{q}^l}  -f^i \parder{g_{ik}}{\dot{q}^l}  - \parder{g_{lk}}{t} . 
\end{equation}
When combined with the last two terms in expression \eqref{S.expr},
this yields
\begin{equation}
  S_{kl}   =
  \dot{q}^j \Big( \parder{h_{jk}}{\dot{q}^l} - \parder{g_{kl}}{q^j} \Big) 
  + f^i\Big( \parder{g_{ik}}{\dot{q}^l} - \parder{g_{kl}}{\dot{q}^i} \Big)
   = 0 
\end{equation}
which vanishes due to commutativity of partial derivatives,
since 
\begin{equation}\label{commutativity.rels}
  \parder{h_{jk}}{\dot{q}^l} 
  = \dfrac{\partial^3 \Lagr}{\partial q^j\partial \dot{q}^k\partial \dot{q}^l}
  = \parder{g_{kl}}{q^j}, 
  \quad
  \parder{g_{ik}}{\dot{q}^l} 
  = \dfrac{\partial^3 \Lagr}{\partial \dot{q}^i\partial \dot{q}^k\partial \dot{q}^l}
  = \parder{g_{kl}}{\dot{q}^i} . 
\end{equation}
\end{proof}

This theorem essentially states a Lagrangian counterpart of
how a time-dependent Hamiltonian vector field acts on functions on phase space.
Often in Hamilton mechanics, a Hamilton vector field is defined to be time-independent,
although the same definition carries over to the time-dependent case without change,
as evidenced by the correspondence \eqref{X.PB.C.F}. 

A useful corollary is obtained when the function is a conserved integral.

\begin{cor}\label{cor:X.PB.C1.C2}
Let $C_1(t,q,\dot{q})$ and $C_2(t,q,\dot{q})$ 
be locally conserved integrals. 
Their corresponding infinitesimal variational symmetries \eqref{varsym.solnsp.gen}
projected into the solution space $\solnsp$
satisfy the relation 
\begin{equation}\label{X.PB.C1.C2}
  \X^\solnsp_{(C_1)}\rfloor dC_2 
  = -\X^\solnsp_{(C_2)}\rfloor dC_1
  = \{ C_2, C_1 \} . 
\end{equation}
\end{cor}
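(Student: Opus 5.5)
The plan is to apply Theorem~\ref{thm:X.PB.C.F} twice, with the roles of the two conserved integrals interchanged, and then use the antisymmetry of the Lagrangian Poisson bracket \eqref{Lagr.PB}.

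First, take $C=C_1$ in Theorem~\ref{thm:X.PB.C.F} and choose the arbitrary function there to be $F=C_2$. The theorem only requires $F(t,q,\dot{q})$ to be a function of the Lagrangian variables, and its conservation plays no role. This gives directly
\begin{equation*}
  \X^\solnsp_{(C_1)}\rfloor dC_2 = \{ C_2, C_1 \} .
\end{equation*}
Second, apply the same theorem with $C=C_2$ and $F=C_1$. This is legitimate since $C_2$ is also locally conserved, so its corresponding symmetry \eqref{varsym.solnsp.gen} is defined. It gives
\begin{equation*}
  \X^\solnsp_{(C_2)}\rfloor dC_1 = \{ C_1, C_2 \} .
\end{equation*}

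Finally, invoke the antisymmetry of the bracket, $\{C_1,C_2\}=-\{C_2,C_1\}$. In \eqref{Lagr.PB} the first term is manifestly antisymmetric under $F_1\leftrightarrow F_2$, and the second is antisymmetric because $c^{ij}=-c^{ji}$ by \eqref{h.matrix}; equivalently, $\mathbf{J}$ in \eqref{PB.symplectic} is skew. Combining the two displays then yields $\X^\solnsp_{(C_1)}\rfloor dC_2=-\X^\solnsp_{(C_2)}\rfloor dC_1=\{C_2,C_1\}$, which is \eqref{X.PB.C1.C2}.

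There is no real obstacle here: all the work is contained in Theorem~\ref{thm:X.PB.C.F}. The only point to check is that the theorem holds identically in $(t,q,\dot{q})$ and not merely on solutions, so that substituting $F=C_2$ raises no issue. It may also be worth remarking that $\{C_2,C_1\}$ is itself locally conserved, since a symmetry maps conserved integrals to conserved integrals, but this is not needed for the stated identity.
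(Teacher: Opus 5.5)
Your proposal is correct and follows the paper's route: the paper presents the corollary as immediate from Theorem~\ref{thm:X.PB.C.F}, applied once with $(C,F)=(C_1,C_2)$ and once with $(C,F)=(C_2,C_1)$, and then combined via antisymmetry of the bracket. You also correctly note that only the conservation of $C$ enters that theorem (in computing $\Dt P^i$), not the conservation of $F$, which is what makes both substitutions legitimate.
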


The preceding results can be used to give a Poisson bracket formula
for the symmetry transformation group generated by
an infinitesimal variational symmetry \eqref{varsym.solnsp.gen}.

\begin{prop}\label{prop:varsymm.transformation.group}
Let $\{\ \cdot\ ,F\}^n$ denote the $n$-times nested Poisson bracket
$\underbrace{\{\ldots,\{\ \cdot\ ,F\},\ldots,F\}}_{\text{$n$ times}}$
for any function $F(t,q,\dot{q})$. 
The symmetry transformation group generated by 
an infinitesimal variational symmetry \eqref{varsym.solnsp.gen}
arising from a locally conserved integral $C(t,q,\dot{q})$
can be expressed in terms of the Poisson bracket as 
\begin{equation}
  \exp\big(\varepsilon \X_{(C)}^\solnsp\big)(q^i,\dot{q}^i) 
  = \exp\big(\varepsilon\{\ \cdot\ ,C\}\big)(q^i,\dot{q}^i) 
  = (q^i,\dot{q}^i) + \sum_{n=1}^{\infty} \frac{\varepsilon^n}{n!} (\{q^i,C\}^n,\{\dot{q}^i,C\}^n)
\end{equation}
with group parameter $\varepsilon$. 
\end{prop}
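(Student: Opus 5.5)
The plan is to identify the vector field $\X_{(C)}^\solnsp$, acting as a derivation on functions of $(t,q,\dot{q})$, with the operator $\{\ \cdot\ ,C\}$, and then apply the standard Lie series expansion of the exponential map. By Theorem~\ref{thm:X.PB.C.F}, for every function $F(t,q,\dot{q})$ we have $\X_{(C)}^\solnsp\rfloor dF = \{F,C\}$. Both sides are first-order derivations in $(q^i,\dot{q}^i)$ with $t$ as a parameter. So the operators $\X_{(C)}^\solnsp$ and $\mathrm{ad}_C := \{\ \cdot\ ,C\}$ agree on all functions, and hence agree as linear operators on functions. Taking $F=q^i$ and $F=\dot{q}^i$ recovers the components $P^i=\{q^i,C\}$ and $\Dt P^i=\{\dot{q}^i,C\}$. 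Iterating, $(\X_{(C)}^\solnsp)^n F = \{F,C\}^n$ for every $n$, by induction. The inductive step applies the theorem to the function $\{F,C\}^{n-1}$, which is again a function of $(t,q,\dot{q})$.

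Next, I would make the exponential map precise through the ODE system \eqref{X.transformation.ODE}. Let $\Phi_\varepsilon$ be the flow of $\X_{(C)}^\solnsp$ at fixed $t$. For any function $F$, set $F^*(\varepsilon)=F\circ\Phi_\varepsilon$. Then $\frac{d}{d\varepsilon}F^* = (\X_{(C)}^\solnsp F)\circ\Phi_\varepsilon$, so by repeated differentiation $\frac{d^n}{d\varepsilon^n}F^*|_{\varepsilon=0} = (\X_{(C)}^\solnsp)^nF = \{F,C\}^n$. Taylor expansion in $\varepsilon$ then gives the Lie series $F^* = \sum_{n\ge0}\frac{\varepsilon^n}{n!}\{F,C\}^n$. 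Applying this to $F=q^i$ and $F=\dot{q}^i$ yields the stated formula, together with the identification $\exp(\varepsilon\X_{(C)}^\solnsp)=\exp(\varepsilon\{\ \cdot\ ,C\})$. Using Corollary~\ref{cor:repr.X.transformation.group}, the result can equally be read as a representation of the symmetry group on $\solnsp$.

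The main subtlety is that $t$ appears explicitly, while $\X_{(C)}^\solnsp$ has no $\partial_t$ component. This has to be checked carefully: in the flow, $t$ is a frozen parameter, and the Poisson bracket \eqref{Lagr.PB} likewise differentiates only in $q,\dot{q}$. The two operators therefore commute with evaluation at fixed $t$, and the induction is consistent. The formula does not need $\{F,C\}$ to be conserved; only Theorem~\ref{thm:X.PB.C.F} for arbitrary $F$ is used.

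The other point to address is convergence. I would state that the identity holds as a formal power series in $\varepsilon$, and as a convergent series for small $|\varepsilon|$ when the data are real-analytic, which is the standard Lie series result. In the smooth case it holds in the sense of asymptotic (Taylor) expansion.
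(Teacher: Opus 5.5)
Your proposal is correct and is essentially the paper's own argument. The paper gives no separate proof: it obtains the formula directly from Theorem~\ref{thm:X.PB.C.F}, identifying $\X_{(C)}^\solnsp$ with the derivation $\{\ \cdot\ ,C\}$ and expanding the exponential map. Your explicit treatment of the flow through the ODE system \eqref{X.transformation.ODE}, the induction on $n$, and the caveat that the series is in general only formal (convergent for real-analytic data) fill in details the paper leaves implicit.
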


Corollary~\ref{cor:X.PB.C1.C2} has an important computational use. 
If the Poisson brackets among a set of conserved quantities are known,
then relation \eqref{X.PB.C1.C2} directly yields 
the action of the corresponding infinitesimal symmetries on the conserved quantities.
Alternatively, if the symmetry actions are known,
then they can be used to obtain the Poisson brackets
from relation \eqref{X.PB.C1.C2}.
This is particularly useful when the symmetries have a geometrical meaning
which allow their action to be determined entirely by geometric considerations. 
This will be illustrated in the examples in forthcoming work \cite{Anc2026}.

\subsection{Lie algebras of variational symmetries and conserved integrals}

The commutator of infinitesimal variational symmetries can be formulated
in terms of Poisson brackets of the locally conserved integrals
arising from Noether's theorem (cf Theorem~\ref{thm:noether}).

\begin{thm}\label{thm:C1.C2.varsymm.PB} 
For any locally conserved integrals $C_1(t,q,\dot{q})$, $C_2(t,q,\dot{q})$,
the commutator of their corresponding infinitesimal variational symmetries \eqref{varsym.solnsp.gen}
projected into the solution space $\solnsp$ is given by 
 \begin{equation}\label{C1.C2.varsymm.PB}
    [\X_{(C_2)}^\solnsp,\X_{(C_1)}^\solnsp] = \X_{(\{ C_1, C_2 \})}^\solnsp
 \end{equation}
using their Poisson bracket. 
\end{thm}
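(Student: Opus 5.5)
The plan is to compare the two sides by their action on functions and use Theorem~\ref{thm:X.PB.C.F} together with the Jacobi identity for the Lagrangian Poisson bracket \eqref{Lagr.PB}. By Theorem~\ref{thm:X.PB.C.F}, the operator $\X_{(C)}^\solnsp$ acts on any $F(t,q,\dot{q})$ as $F\mapsto \{F,C\}$. A vector field on the coordinate space $(q^i,\dot{q}^i)$, at fixed $t$, is determined by its action on all such functions; in fact its action on the coordinate functions $q^i,\dot{q}^i$ suffices. So it is enough to show that both sides of \eqref{C1.C2.varsymm.PB} give the same result on an arbitrary $F$.

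First I will check that $\{C_1,C_2\}$ is itself a locally conserved integral, so that the right-hand side is defined through Corollary~\ref{cor:noether.symm}. Using \eqref{local.conserved.PB}, this requires
\[
\partial_t\{C_1,C_2\}+\{\{C_1,C_2\},\tilde\Ham\}=0 .
\]
The Jacobi identity disposes of the bracket with $\tilde\Ham$. The remaining issue is that $\partial_t$ does not commute with the bracket in Lagrangian variables, because $g^{-1}$ and $c^{ij}$ depend on $t$. I will avoid this by passing to phase-space variables via the Legendre relation \eqref{Legendre.conserved}. There the canonical bracket \eqref{Ham.PB} has constant coefficients, so $\partial_t$ at fixed $(q,p)$ is a derivation of it, and the standard Poisson theorem applies. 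Translating back is legitimate because the bracket \eqref{Lagr.PB} was obtained from \eqref{Ham.PB} by a change of variables.

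Next I will compute the commutator on $F$:
\[
[\X_{(C_2)}^\solnsp,\X_{(C_1)}^\solnsp]F=\{\{F,C_1\},C_2\}-\{\{F,C_2\},C_1\}.
\]
Here I use that $\X_{(C_1)}^\solnsp F=\{F,C_1\}$ is again a function of $(t,q,\dot{q})$, so Theorem~\ref{thm:X.PB.C.F} can be applied a second time with $F$ replaced by $\{F,C_1\}$. By the Jacobi identity, the right-hand side equals $\{F,\{C_1,C_2\}\}$. This is $\X_{(\{C_1,C_2\})}^\solnsp F$, again by Theorem~\ref{thm:X.PB.C.F}.

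The main obstacle is making sure Theorem~\ref{thm:X.PB.C.F} really holds for arbitrary $F$, including functions with explicit $t$-dependence. The vector fields $\X^\solnsp$ have no $\partial_t$ component, so $t$ enters only as a parameter and the commutator is taken at fixed $t$; I expect this to be fine. I also need to confirm the sign convention, namely that the ordering $[\X_{(C_2)},\X_{(C_1)}]$ matches $\{C_1,C_2\}$. The computation above shows that it does.
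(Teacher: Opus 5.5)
Your argument is correct, but it takes a genuinely different route from the paper's. The Appendix works with characteristics. It takes for granted that the commutator has the form $P^i_{\text{l.h.s.}}\partial_{q^i}+\Dt P^i_{\text{l.h.s.}}\partial_{\dot q^i}$. It then computes $P^i_{\text{l.h.s.}}=\X^\solnsp_{(C_2)}\rfloor dP_1^i-\X^\solnsp_{(C_1)}\rfloor dP_2^i$, using Theorem~\ref{thm:X.PB.C.F} only for $C_1$ and $C_2$, and expands $P^i_{\text{r.h.s.}}=g^{-1}{}^{ij}\partial_{\dot q^j}\{C_1,C_2\}$ from \eqref{Lagr.PB}. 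Finally it cancels the difference term by term in a long index computation whose only input is the symmetry of third derivatives of $\Lagr$ in \eqref{commutativity.rels}.

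You instead treat the statement as the familiar fact that $C\mapsto\X^\solnsp_{(C)}$ intertwines brackets, which follows formally from Theorem~\ref{thm:X.PB.C.F} and the Jacobi identity. This is much shorter and makes the sign convention transparent. It also identifies the two vector fields completely through their action on arbitrary $F$, so you never need the separate structural claim about the $\partial_{\dot q}$ component of the commutator.

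What it costs is two inputs that the paper's brute-force computation does not invoke:
\begin{itemize}
\item the Jacobi identity for \eqref{Lagr.PB}, which the paper justifies only by transport from the canonical bracket;
\item local conservation of $\{C_1,C_2\}$, which you need before Theorem~\ref{thm:X.PB.C.F} can be applied with $C=\{C_1,C_2\}$.
\end{itemize}

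Doing the Poisson-theorem step in phase space, as you propose, is sound. It is also better than working with \eqref{local.conserved.PB} in Lagrangian variables. With $\partial_t$ taken at fixed $(q,\dot q)$, one finds $\partial_t C+\{C,\tilde\Ham\}=\Dt C+g^{-1}{}^{ij}\Lagr_{t\dot q^j}C_{\dot q^i}$, so that form of the condition is correct only when $\Lagr_{t\dot q}=0$.

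Alternatively, you can stay entirely in Lagrangian variables. By Theorem~\ref{thm:X.PB.C.F}, $\{C_1,C_2\}=\X^\solnsp_{(C_2)}\rfloor dC_1$. Since $\X^\solnsp_{(C_2)}$ is a symmetry (Corollary~\ref{cor:noether.symm}), condition \eqref{X.symm.cond.alt} is equivalent to $[\X^\solnsp_{(C_2)},\Dt]=0$. This gives $\Dt\{C_1,C_2\}=\X^\solnsp_{(C_2)}\rfloor d(\Dt C_1)=0$ directly.
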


The counterpart of this result in Hamiltonian mechanics (see e.g. \Ref{GolPooSaf,Arn-book}) 
is well known in the case when the conserved integrals are constants of motion.
A special case when the conserved integrals arise from point symmetries of the Hamiltonian is also widely known.
These results actually hold in much more generality, as stated in Theorem~\ref{thm:C1.C2.varsymm.PB}.
A direct proof will be given in the Appendix. 

A main corollary of this theorem is that
there is a homomorphism relating
the Lie algebra of infinitesimal variational symmetries
and Lie algebra of locally conserved integrals. 

\begin{cor}\label{cor:liealgebras}
The Lie algebra of all locally conserved integrals under the Poisson bracket
is homomorphic to
the Lie algebra of all infinitesimal variational symmetries \eqref{varsym.solnsp.gen} 
under commutation, 
where the homomorphism uses the correspondence \eqref{PfromC.gen}
given by Noether's theorem (cf Theorem~\ref{thm:noether}). 
\end{cor}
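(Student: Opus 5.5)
The corollary follows from Theorem~\ref{thm:C1.C2.varsymm.PB} together with the structural facts already recorded in Section~\ref{sec:PB.symmaction}. The plan is to define a map $\Phi: C \mapsto \X^\solnsp_{(C)}$ from locally conserved integrals to projected variational symmetries, using the Noether correspondence \eqref{PfromC.gen}, $P^i = g^{-1}{}^{ij}C_{\dot{q}^j}$. We then check that $\Phi$ is linear, maps into the variational symmetries, and intertwines the Poisson bracket with the commutator, up to the sign convention fixed in \eqref{C1.C2.varsymm.PB}.

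\textbf{Step 1: $\Phi$ is well defined and linear.} By Theorem~\ref{thm:noether} and Corollary~\ref{cor:noether.symm}, every locally conserved $C$ gives a variational symmetry $\X^\solnsp_{(C)}$, and every variational symmetry arises this way. The map $C\mapsto P^i$ is linear because $\partial_{\dot q^j}$ is linear and $g^{-1}$ is fixed. The map $P\mapsto \X^\solnsp_P = P^i\partial_{q^i} + \Dt P^i\partial_{\dot q^i}$ is linear because $\Dt$ is linear. So $\Phi$ is linear and surjective onto the variational symmetry algebra.

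\textbf{Step 2: closure of the source algebra.} We must check that the locally conserved integrals form a Lie algebra under $\{\cdot,\cdot\}$, namely that $\{C_1,C_2\}$ is again locally conserved. Using \eqref{local.conserved.PB}, $\partial_t\{C_1,C_2\} + \{\{C_1,C_2\},\tilde\Ham\}$ equals
\[
\{\partial_t C_1 + \{C_1,\tilde\Ham\}, C_2\} + \{C_1, \partial_t C_2 + \{C_2,\tilde\Ham\}\}.
\]
This follows from the Jacobi identity of \eqref{Lagr.PB} together with the fact that $\partial_t$ acts as a derivation of the phase-space bracket \eqref{Ham.PB}. The right-hand side then vanishes. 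The point to watch is that in Lagrangian variables $\partial_t$ at fixed $\dot q$ is not the phase-space $\partial_t$. I would avoid this issue by passing through the Legendre relation \eqref{Legendre.conserved}, doing the computation in $(t,q,p)$, and transporting the result back.

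\textbf{Step 3: the bracket relation.} Theorem~\ref{thm:C1.C2.varsymm.PB} gives $[\X_{(C_2)}^\solnsp,\X_{(C_1)}^\solnsp] = \X_{(\{C_1,C_2\})}^\solnsp$, which is equivalent to
\[
\Phi(\{C_1,C_2\}) = -[\Phi(C_1),\Phi(C_2)].
\]
Hence $C\mapsto -\Phi(C)$, or equivalently $\Phi$ viewed as a map into the opposite Lie algebra, is a Lie algebra homomorphism; either way this is the stated homomorphism via \eqref{PfromC.gen}. Closure of the variational symmetries under commutation is then automatic, since the commutator is the image of the conserved integral $\{C_1,C_2\}$.

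\textbf{Kernel.} I would identify the kernel as consisting of those $C$ with $C_{\dot q}=0$. Such a $C$ satisfies $\Dt C = C_t + \dot q^i C_{q^i}=0$ identically in $\dot q$, which forces $C$ to be constant. The homomorphism is therefore an isomorphism modulo constants.

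\textbf{Main obstacle.} The delicate part is Step 2, namely establishing that the Lagrangian bracket of two time-dependent conserved integrals is conserved. The resolution is to use the phase-space derivation property of $\partial_t$ rather than the Lagrangian-coordinate $\partial_t$.
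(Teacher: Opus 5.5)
Your proposal is correct and takes the same route as the paper, which reads the corollary directly off Theorem~\ref{thm:C1.C2.varsymm.PB} through the Noether map $C\mapsto \X^\solnsp_{(C)}$ with $P^i=g^{-1}{}^{ij}C_{\dot q^j}$, and notes that the kernel consists of the constants. Your extra points make explicit what the paper leaves implicit: checking closure of the conserved integrals in $(t,q,p)$, which avoids the fact that $\partial_t$ at fixed $\dot q$ is not a derivation of \eqref{Lagr.PB}, and observing that the sign in \eqref{C1.C2.varsymm.PB} makes $C\mapsto -\X^\solnsp_{(C)}$ (equivalently $\Phi$ into the opposite algebra) the literal homomorphism.
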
  

The kernel of the homomorphism consists of constants, which are trivially conserved.
They are mapped into the zero symmetry by the Noether correspondence. 

To understand the homomorphism fully,
it is necessary to consider the notion of independence for locally conserved integrals
as well as infinitesimal variational symmetries. 
Recall that two locally conserved integrals $C_1$ and $C_2$
are \emph{functionally independent}
if $k(C_1,C_2)=0$ holds only for the trivial function $k\equiv0$.
The Noether correspondence \eqref{PfromC.gen}
leads to the following counterpart notion of independence for variational symmetries.

\begin{defn}\label{defn:X.dependent}
Two variational symmetries $\X^\solnsp_{(C_1)}$ and $\X^\solnsp_{(C_2)}$
are \emph{independent over the solution space}
if and only if 
\begin{equation}\label{X.dependent}
  k_{C_1} \X^\solnsp_{(C_1)} + k_{C_2} \X^\solnsp_{(C_2)} =0
\end{equation}
holds only when the function $k(C_1,C_2)\equiv0$ is trivial. 
\end{defn}

Note that this is a stronger condition than linear independence,
since two variational symmetries could be linearly independent, 
$c_1\X^\solnsp_{(C_1)} + c_2\X^\solnsp_{(C_2)} \neq 0$ for all constants $c_1$ and $c_2$,
but still be dependent over the solution space \eqref{X.dependent}.
In particular, the variational symmetries
$\X^\solnsp_{(C_1)}$ and $\X^\solnsp_{(k(C_1))} = k'(C_1)\X^\solnsp_{(C_1)}$
are linearly independent when the function $k$ is nonlinear, $k''\neq0$, 
yet they are dependent over the solution space.

The following result is a straightforward consequence of the preceding discussion. 

\begin{prop}\label{prop:count}
For a dynamical system \eqref{EL.eqn.gen} with $N$ degrees of freedom:
(i) A set of locally conserved integrals is functionally independent
if and only if the corresponding variational symmetries
are independent over the solution space.
(ii) The number of functionally-independent locally conserved integrals
is $2N$,
which is equal to the number of variational symmetries that are independent over the solution space.
(iii) The total number of linearly independent variational symmetries is infinite.
(iv) A set of locally conserved integrals is homomorphic to a finite-dimensional Lie algebra of variational symmetries
if and only if their Poisson brackets close linearly.
(v) If a variational symmetry $\X^\solnsp_{(C)}$ is a point symmetry,
then the variational symmetries $\X^\solnsp_{(k(C))} =k'(C)\X^\solnsp_{(C)}$
for any nonlinear function $k(C)$ are dynamical symmetries.
\end{prop}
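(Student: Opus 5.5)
The plan is to handle the five parts in turn. The backbone is the Noether correspondence \eqref{PfromC.gen}, written as $\X^\solnsp_{(C)}$, together with Theorem~\ref{thm:conserved.integrals} and Theorem~\ref{thm:C1.C2.varsymm.PB}.

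The first step is a chain-rule property of the correspondence. For any smooth $k$ and conserved integrals $C_1,\ldots,C_m$, the function $k(C_1,\ldots,C_m)$ is again locally conserved. Its multiplier is $Q_i=\partial_{\dot q^i}k=\sum_a k_{C_a}\partial_{\dot q^i}C_a$. Because $P^i=g^{-1}{}^{ij}Q_j$ is linear in $Q$, this gives $P^i_{(k)}=\sum_a k_{C_a}P^i_{(C_a)}$. The factors $k_{C_a}$ are conserved, so $\Dt(k_{C_a}P^i)=k_{C_a}\Dt P^i$. Hence $\X^\solnsp_{(k(C))}=\sum_a k_{C_a}\X^\solnsp_{(C_a)}$.

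For (i), I will use Theorem~\ref{thm:X.PB.C.F}, which identifies $\X^\solnsp_{(C)}$ with $\{\cdot,C\}$, and the fact that the Lagrangian bracket \eqref{Lagr.PB} is nondegenerate.
\begin{itemize}
\item Suppose $k(C_1,\ldots,C_m)\equiv0$ with $k$ nontrivial. The chain-rule formula then gives $\sum_a k_{C_a}\X^\solnsp_{(C_a)}=\X^\solnsp_{(0)}=0$, which is a nontrivial dependence over the solution space.
\item Conversely, suppose $\sum_a k_a\X^\solnsp_{(C_a)}=0$ with coefficients $k_a$ that are functions of the $C$'s. Applying this to all $F$ gives $\{F,\sum_a k_a C_a\text{-differential}\}=0$, that is, $\sum_a k_a\,\partial_{\dot q}C_a=0$. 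Since $\mathbf J$ is invertible, this is equivalent to $\sum_a k_a\,dC_a=0$ in the $(q,\dot q)$ directions.
\item Differentiating the conservation law $\Dt C_a=0$ shows that the $t$-direction is determined as well. So the differentials $dC_a$ are dependent, which means the $C_a$ are functionally dependent.
\end{itemize}
The main obstacle is here. Definition~\ref{defn:X.dependent} as written is stated only for two symmetries with coefficients of gradient form $k_{C_a}$. I plan to read it that way, so that $\sum_a k_{C_a}dC_a=dk=0$ forces $k$ to be constant (and hence trivial, up to the kernel of the homomorphism). I would also remark that the general-coefficient version follows from the rank statement above.

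Part (ii) then follows from (i) together with Theorem~\ref{thm:conserved.integrals}. That theorem gives $2N$ independent integrals. Any further conserved function depends on these, because the level sets of the $2N$ initial-value integrals are the solution curves in the $(2N+1)$-dimensional space.

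For (iii), take independent integrals $C_1,\ldots,C_{2N}$. The symmetries $\X^\solnsp_{(C_1^n)}=nC_1^{n-1}\X^\solnsp_{(C_1)}$ for $n=1,2,\ldots$ are linearly independent over constants, because the functions $C_1^{n-1}$ are linearly independent and $\X^\solnsp_{(C_1)}\neq0$.

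For (iv), I will use Theorem~\ref{thm:C1.C2.varsymm.PB} and Corollary~\ref{cor:liealgebras}.
\begin{itemize}
\item If $\{C_a,C_b\}=c_{ab}{}^dC_d+\text{const}$, the commutators close with structure constants $c_{ab}{}^d$, since constants lie in the kernel of the homomorphism.
\item Conversely, suppose the commutators close linearly. Then the symmetry of $\{C_a,C_b\}-c_{ab}{}^dC_d$ vanishes. By \eqref{PfromC.gen}, its $\dot q$-gradient vanishes. Its conservation then forces its $q$- and $t$-derivatives to vanish too, so it is a constant.
\end{itemize}

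For (v), the chain-rule formula gives $\partial_{\dot q^j}\bigl(k'(C)P^i\bigr)=k'(C)\bigl(-\tau\delta_j{}^i\bigr)+k''(C)\,C_{\dot q^j}P^i$. The second term is not of the form $-\tilde\tau(t,q)\delta_j{}^i$: it has rank one and depends on $\dot q$. This holds provided $C_{\dot q}\neq0$, which is guaranteed because a nontrivial point symmetry has $P\neq0$. Therefore $k'(C)\X^\solnsp_{(C)}$ falls under \eqref{P.dyn.cond} and is a dynamical symmetry.
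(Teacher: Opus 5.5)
Parts (i)--(iv) are sound. They follow the route the paper has in mind; the paper gives no separate proof and only calls the result a consequence of the preceding discussion. That route uses the chain rule $\X^\solnsp_{(k(C))}=\sum_a k_{C_a}\X^\solnsp_{(C_a)}$, the fact that $\X^\solnsp_{(F)}=0$ forces a conserved $F$ to be constant, Theorem~\ref{thm:conserved.integrals} together with your level-set argument for maximality, and Theorem~\ref{thm:C1.C2.varsymm.PB} with constants in the kernel. Reading ``trivial'' in Definition~\ref{defn:X.dependent} as ``constant'' is the right call: taken literally, $k=1$ would make every pair dependent.

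The gap is in (v). What actually excludes point form is a rank count, and it only works for $N\geq2$. Suppose $-k'(C)\tau\,\delta_j{}^i+k''(C)P^i g_{jl}P^l$ equalled $-\tilde\tau(t,q)\delta_j{}^i$. Then the matrix $k''(C)P^ig_{jl}P^l$ would equal $(k'(C)\tau-\tilde\tau)\delta_j{}^i$. The left side has rank one wherever $k''(C)\neq0$ and $P\neq0$, while the right side has rank $0$ or $N$, which is a contradiction only when $N\geq2$. You must argue about the sum rather than the second term alone, because the first term $-k'(C)\tau\,\delta_j{}^i$ is itself $\dot q$-dependent whenever $\tau\neq0$. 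Your claim that $k''(C)C_{\dot q^j}P^i$ depends on $\dot q$ is also false in general. For $\Lagr=\tfrac12|\dot q|^2$, $C=\dot q^1$, $k=\tfrac12C^2$ it is a constant matrix, yet the symmetry is still dynamical by the rank count.

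For $N=1$ the rank count is vacuous, and the conclusion itself fails. Take the free particle $\Lagr=\tfrac12\dot q^2$ and $C=\dot q$, whose symmetry $P=1$ is a point symmetry. With $k(C)=\tfrac12C^2=E$ you get $P_{(E)}=k'(C)P=\dot q$. This satisfies \eqref{P.point} with $\tau=-1$: it is exactly the time-translation point symmetry \eqref{E.Y} found at the end of Section~\ref{sec:liouville.system}. So your proof of (v) is valid only for $N\geq2$, and only locally where $k''(C)\neq0$. For $N=1$ the statement needs an additional hypothesis, and you should say so rather than appeal to $\dot q$-dependence.
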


There is an important computational use for
Theorem~\ref{thm:C1.C2.varsymm.PB} together with Corollary~\ref{cor:liealgebras}. 
If the commutators of a set of infinitesimal symmetries are known,
then relation \eqref{C1.C2.varsymm.PB} directly yields the corresponding Poisson brackets. 
Alternatively, if the Poisson brackets are known, 
then relation \eqref{C1.C2.varsymm.PB} provides the commutators of the corresponding infinitesimal symmetries. 
This will be illustrated in the examples in forthcoming work \cite{Anc2026}.

\section{Liouville integrable systems}\label{sec:liouville.system}

The main results stated in
Theorems~\ref{thm:noether} to~\ref{thm:C1.C2.varsymm.PB},
as well as in Corollaries~\ref{cor:repr.X.transformation.group} to~\ref{cor:liealgebras},
will now be applied to dynamical systems that are locally Liouville integrable.

Recall \cite{Arn-book} that an autonomous Hamilton system
exhibits global Liouville integrability 
when the number of commuting constants of motion 
is the same as the number of degrees of freedom. 
This allows the introduction of action-angle variables 
for explicitly integrating the equations of motion
provided that all of the constants of motion are globally conserved
on every solution trajectory. 
An extension of this result holds for non-autonomous Hamiltonian systems \cite{Bou.Bou}
that possess commuting globally conserved integrals,
which may be integrals of motion or constants of motion. 

Existence of action-angle variables does not itself rely on any global conditions,
and hence a local version of Liouville integrability can be formulated
for dynamical systems \eqref{EL.eqn.gen} with $N$ degrees of freedom 
by using $N$ locally conserved integrals that are commuting. 
One primary objective here is to see how the resulting action-angle variables
lead to expressions for additional locally conserved integrals,
which will allow for explicit integration of the equations of motion locally in time. 
In particular, there will be a total of $2N$ locally conserved integrals.
Through Noether's theorem, this yields $2N$ variational symmetries,
each of which will generate a one-dimensional Lie group of symmetry transformations.
These transformation groups act in the space of Lagrangian variables, 
or equivalently in phase space.
They have an equivalent formulation, involving gauge freedom,
in an extended space of variables when time is adjoined as a variable. 
This gauge freedom is shown to be useful for obtaining
an explicit form for the transformations. 

It is important to emphasize that the extra $N$ conserved integrals
arising from the action-angle variables 
may be non-commuting, or may be only piecewise continuous on solution trajectories. 
(See \Ref{Ley} for examples.)

\subsection{Conserved integrals from action-angle variables}

Suppose a dynamical system \eqref{EL.eqn.gen}
possesses $N$ functionally independent locally conserved integrals 
$C_i(q,\dot{q})$, $i=1,\dots, N$, 
whose Poisson brackets vanish, $\{C_i,C_j\}=0$.
They can be used to define a generating function $G(t,q,C)$
by requiring that
\begin{equation}\label{G.eqn}
  \dfrac{\partial G}{\partial q^i}=p_i
\end{equation} 
yields the Hamiltonian momenta \eqref{Ham.p}
given in terms of the Lagrangian  $\Lagr(t,q,\dot{q})$,
where $\dot{q}^i$ is expressed as a function of $t,q^j,C_j$
by inverting the expressions $C_i=C_i(t,q,\dot{q})$
through use of the implicit function theorem. 
Integration of equation \eqref{G.eqn} then gives 
\begin{equation}\label{G.funct}
G = \int p_i(t,q,C)\,dq^i, 
\quad
p_i = \parder{\Lagr}{\dot{q}^i}
\end{equation}
where the constant of integration, which is an arbitrary function of $C_j$ and $t$,
has been put to zero for simplicity. 

The generating function \eqref{G.funct} gives rise to a canonical transformation
\begin{equation}\label{G.transformation}
  (q^i,p_i) \to (\Theta^i,C_i)
\end{equation}
which locally preserves the Hamiltonian form \eqref{Ham.eom.gen}
of the equations of motion, 
with the new canonical momentum variables being the commuting conserved integrals,
and with the new position variables defined as
\begin{equation}\label{angle.variables}
  \Theta^i = \dfrac{\partial G}{\partial C_i} . 
\end{equation}
The transformed equations of motion are given by 
\begin{equation}\label{angle.eom}
\dot{\Theta}^i =\parder{K}{C_i}, 
\quad
\dot{C}_i = -\parder{K}{\Theta^i} =0
\end{equation}
with
\begin{equation}
  K = H + \parder{G}{t}
\end{equation}
(see e.g. \cite{GolPooSaf})
where $H=H(t,\Theta,C)$ is the original Hamiltonian \eqref{Ham.gen}
expressed in terms of the new canonical variables $(\Theta^i,C_i)$ and $t$. 

This transformation \eqref{G.transformation}
amounts to introducing action-angle variables,
since local conservation $\dot{C}_i=0$ implies that
$K$ has no dependence on $\Theta^i$,
whereby $\dot{\Theta}^i$ is just a function of $C_i$ and $t$.
As a consequence, 
$\Theta^i$ can be obtained in an explicit form by integration of
$\dfrac{\partial K}{\partial C_i}$ with respect to $t$,
which together with the commuting conserved integrals gives
a total of $2N$ conserved integrals that represent
the explicit solution of the system locally in time. 
Namely, the system is locally Liouville integrable.
This generalizes the well-known result \cite{GolPooSaf,Arn-book}
that holds for integrable autonomous Hamiltonian systems.

Specifically, when $\dfrac{\partial H}{\partial t}=0$,
then $G$ has no explicit dependence on $t$,
and hence $K =H(C)$.
Since $H=E$ is the energy constant of motion of the system,
it can be taken as one of the $N$ starting conserved integrals, say $C_1=E$.
Consequently, $K=E$ yields 
\begin{equation}
\dot{\Theta}^1 = 1,
\quad
\dot{\Theta}^2 = \ldots = \dot{\Theta}^N =0 , 
\end{equation}
from the angle equations \eqref{angle.eom}, 
whereby $\Theta^2,\ldots, \Theta^N$ are $N-1$ locally conserved constants of motion,
while $t - \Theta^1$ is a locally conserved temporal integral of motion. 

In the case when $\dfrac{\partial H}{\partial t}\neq0$,
a similar statement holds by the following argument.
The angles \eqref{angle.variables} give integrals of motion 
$A^i = \Theta^i - \bigint \dfrac{\partial K}{\partial C_i}\,dt$,
$i=1,\ldots,N$,
which are locally conserved, $\dot{A}^i=0$,
due to equations \eqref{angle.eom}. 
Now consider the linear homogeneous partial differential equation
$0= \partial_t F(A^1,\ldots,A^N) = -\dfrac{\partial K}{\partial C_i} \partial_{A^i} F$.
It admits $N-1$ particular solutions $F_1(A),\ldots,F_{N-1}(A)$,
each of which has no explicit dependence on $t$.
Thus, these solutions represent $N-1$ locally conserved constants of motion.
Any one of $A_1,\ldots,A_N$ provides a locally conserved integral of motion. 
This constitutes local Liouville integrability for non-autonomous Hamiltonian systems. 

A purely Lagrangian formulation of local integrability will now be stated. 

\begin{thm}\label{thm:liouville}
For a dynamical system \eqref{EL.eqn.gen} with 
$N$ locally conserved integrals $C_i(q,\dot{q})$, $i=1,\dots, N$,
that are functionally independent and have vanishing Poisson brackets, 
let 
\begin{equation} 
\mathcal{S}(t,q,C) = \int \parder{\Lagr}{\dot{q}^j}(t,q,\dot{q}(t,q,C))\, dq^j
\end{equation}
where $\dot{q}^j(t,q,C)$ is given by inverting $C_i=C_i(t,q,\dot{q})$. 
The generalized angle variables 
\begin{equation} 
\Theta^i = \parder{\mathcal{S}}{C_i} = \int g_{jk} \partial_{C_i}\dot{q}^k(t,q,C)\, dq^j
\end{equation}
lead to $N$ additional locally conserved integrals,
where $g_{jk}$ is the Hessian matrix \eqref{g.matrix}.
In particular:\\
(i) if the Lagrangian is autonomous, $\partial_t \Lagr =0$,
then by choosing
\begin{equation}
  C_1 = H = \dot{q}^i \partial_{\dot{q}^i}\Lagr -\Lagr,
\end{equation}
the generalized angles $\Theta^2,\ldots, \Theta^N$ constitute $N-1$ constants of motion,
and $T= t - \Theta^1$ is a temporal integral of motion.\\
(ii) If the Lagrangian is non-autonomous, $\partial_t \Lagr \neq0$,
then 
\begin{equation}\label{temporal.integral}
  \Upsilon^i=  \int (\partial_t\Theta^i + \partial_{q^j}\Theta^i \dot{q}^j(t,q,C)))\, dt - \Theta^i
\end{equation}
are integrals of motion,
in terms of which $N-1$ constants of motion are given by
solutions of the linear homogeneous partial differential equation 
$\dfrac{\partial K}{\partial C_i} \partial_{\Upsilon^i} F(\Upsilon^1,\ldots,\Upsilon^N)=0$ 
where $K = \dot{\mathcal{S}}(t,q,C)  -\Lagr(t,q,\dot{q}(t,q,C))$. 
\end{thm}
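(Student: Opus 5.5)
The plan is to transcribe the Hamiltonian action-angle construction described just above the theorem into Lagrangian variables, using the Legendre relation \eqref{Legendre.conserved}. Since $p_j=\partial_{\dot{q}^j}\Lagr$, the function $\mathcal{S}(t,q,C)$ is exactly the generating function $G$ of \eqref{G.funct}. Local integrability of $\partial_{q^i}\mathcal{S}=p_i(t,q,C)$, i.e.\ $\partial_{q^i}p_j=\partial_{q^j}p_i$ at fixed $C$, follows from $\{C_i,C_j\}=0$. Here I would use the standard argument: the level sets $C=\text{const}$ are Lagrangian submanifolds of phase space, equivalently the bracket \eqref{Lagr.PB} restricted to the $C_i$ vanishes. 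Differentiating under the integral sign with $\partial_{C_i}p_j=g_{jk}\partial_{C_i}\dot{q}^k$ gives the stated formula for $\Theta^i$.

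By the standard theory of type-2 generating functions, the map $(q,p)\to(\Theta,C)$ is canonical. The transformed equations are \eqref{angle.eom} with $K=H+\partial_t\mathcal{S}$. I would rewrite $K$ in Lagrangian form. Along the substitution $\dot{q}=\dot{q}(t,q,C)$ we have $\dot{\mathcal{S}}=\partial_t\mathcal{S}+p_j\dot{q}^j$ and $H=p_j\dot{q}^j-\Lagr$, so $K=\dot{\mathcal{S}}-\Lagr$, matching the statement. Since $\dot C_i=0$ by hypothesis (Proposition~\ref{prop:conserved}), we get $\partial_\Theta K=0$, so $K=K(t,C)$ and $\dot\Theta^i=\partial_{C_i}K(t,C)$.

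For (i), $\partial_t\Lagr=0$ makes $\dot{q}(q,C)$ and hence $\mathcal{S}$ independent of $t$, so $K=H$. Choosing $C_1=H$ gives $K=C_1$, hence $\dot\Theta^1=1$ and $\dot\Theta^a=0$ for $a\ge2$. Therefore $\Theta^2,\dots,\Theta^N$ are constants of motion and $T=t-\Theta^1$ satisfies $\dot T=0$ with $\partial_tT\neq0$, so $T$ is an integral of motion. For (ii), the integrand in \eqref{temporal.integral} is $\partial_t\Theta^i+\partial_{q^j}\Theta^i\dot{q}^j$, which is the total derivative of $\Theta^i(t,q,C)$ at fixed $C$. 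Since $C$ is constant on solutions, this equals $\dot\Theta^i=\partial_{C_i}K(t,C)$, a function of $(t,C)$ only. Its $t$-integral therefore yields conserved $\Upsilon^i$ (up to sign, these are the $A^i$ of the preceding discussion). The $N-1$ constants of motion then come from the characteristic-equation argument already used in the proof of Theorem~\ref{thm:conserved.integrals}: $\partial_tF(\Upsilon)=0$ reduces to $\partial_{C_i}K\,\partial_{\Upsilon^i}F=0$, and a nonvanishing first-order linear PDE in $N$ variables has $N-1$ independent solutions. Functional independence of the $2N$ quantities $(C_i,\Theta^i)$, or $(C_i,\Upsilon^i)$, follows because the transformation is canonical, so its Jacobian is invertible.

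The main obstacle is the first step. One must show rigorously that $\{C_i,C_j\}=0$ in the Lagrangian bracket \eqref{Lagr.PB} implies $\partial_{q^{[i}}p_{j]}(t,q,C)=0$. I would attempt this by implicit differentiation of $C_k(t,q,\dot{q}(t,q,C))=C_k$, giving $\partial_{q^i}\dot{q}=-(\partial_{\dot q}C)^{-1}\partial_{q^i}C$. Substituting this into $\partial_{q^i}p_j=h_{ij}+g_{jk}\partial_{q^i}\dot{q}^k$ and contracting with the invertible matrix $\partial_{\dot q}C$, the antisymmetric part should reduce to the expression for $\{C_k,C_l\}$. A secondary subtlety is the sign convention in \eqref{temporal.integral} and the $t$-dependence hidden in $\partial_{C_i}K$; I would resolve both by direct comparison with the $A^i$ defined before the theorem.
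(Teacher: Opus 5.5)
Your route is the paper's. The paper's proof is only the transcription, via $\partial_{C_i}=\partial_{C_i}\dot q^j\,\partial_{\dot q^j}$, of the Hamiltonian action--angle discussion that precedes the theorem ($G=\mathcal S$, $K=H+\partial_tG$, the $A^i$, and the PDE for $F$). Everything you do through part (i), and through the conservation of $\Upsilon^i$ (with the $t$-integral taken at fixed $C$), is correct. Your check that $\{C_k,C_l\}=0$ is equivalent to closedness of $p_j\,dq^j$ at fixed $C$ fills in a point the paper leaves implicit.

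The step that fails is the last one in (ii): producing $N-1$ constants of motion from $\partial_{C_i}K\,\partial_{\Upsilon^i}F=0$. Comparing with the $A^i$ will not rescue it, because the paper's sketch makes the same move. Two things go wrong.

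First, $\partial_t\Upsilon^i=\partial_{C_i}K$ is the explicit $t$-derivative at fixed $(\Theta,C)$. A constant of motion, however, must have no explicit $t$ at fixed $(q,\dot q)$. For $\partial_t\Lagr\neq0$ the chart $(\Theta,C)$ depends on $t$ through $\mathcal S$, and one finds instead $\partial_t\Upsilon^i|_{q,\dot q}=\dot q^j(q,C)\,\partial_{q^j}\Theta^i$.

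Second, the coefficients $\partial_{C_i}K(t,C)$ are not functions of the $\Upsilon$'s. The mechanism of Theorem~\ref{thm:conserved.integrals} worked only because autonomy made $A_j=\partial_tC_j$ conserved, hence a function of the integrals; that is not available here.

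A counterexample satisfying all hypotheses is $\Lagr=\tfrac12e^{ct}(\dot x^2+\dot y^2)$ with $c\neq0$, $C_1=\dot x+cx$, $C_2=\dot y+cy$. These are $t$-independent, functionally independent, and commute. Then
$\mathcal S=e^{ct}(C_1x+C_2y-\tfrac c2(x^2+y^2))$, $\Theta^i=e^{ct}q^i$, $K=\tfrac12e^{ct}(C_1^2+C_2^2)$, and $\Upsilon^i=e^{ct}\dot q^i/c$.
The equation $C_1\partial_{\Upsilon^1}F+C_2\partial_{\Upsilon^2}F=0$ has no non-constant solution $F(\Upsilon)$. If $C$ is admitted as a parameter, it gives $C_2\Upsilon^1-C_1\Upsilon^2=e^{ct}(y\dot x-x\dot y)$, which is conserved but explicitly time-dependent. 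The genuine extra constant of motion $\dot x/\dot y=\Upsilon^1/\Upsilon^2$ does not satisfy your equation.

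The missing idea that repairs the conclusion is this. With $C_i=C_i(q,\dot q)$ and $\det(\partial_{\dot q}C)\neq0$, the identity $\Dt C_k=0$ reads $\partial_{\dot q^j}C_k\,f^j=-\dot q^m\partial_{q^m}C_k$. So $f$ is $t$-independent, and the equations of motion are autonomous even though $\Lagr$ is not. Hence $\partial_t$ at fixed $(q,\dot q)$ commutes with $\Dt$. It follows that $a^i=\partial_t\Upsilon^i|_{q,\dot q}$ is conserved, hence a function of $(\Upsilon,C)$. The argument of Theorem~\ref{thm:conserved.integrals}, applied to $a^i(\Upsilon,C)\,\partial_{\Upsilon^i}F=0$ with $C$ as parameters, then yields $N-1$ constants of motion. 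In the example $a^i=c\,\Upsilon^i$, which gives $\dot x/\dot y$.

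This repair genuinely needs the $C_i$ to be $t$-independent. For $\ddot x=\phi(t)$, $\ddot y=0$ with $\phi$ non-constant, $C_1=\dot x-\int\phi\,dt$ and $C_2=\dot y$ commute, yet the only constants of motion are functions of $\dot y$.
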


The proof essentially consists of transcribing
the generating function \eqref{G.funct}
and the definition of the angle variables \eqref{angle.variables}
into Lagrangian form,
combined with use of the chain rule
$\partial_{C_i}= \partial_{C_i}\dot{q}^j(t,q,C)\partial_{\dot{q}^j}$.

\subsection{Variational symmetries}

Through the Noether correspondence \eqref{PfromC.gen},
the $N$ locally conserved integrals $C_i(t,q,\dot{q})$
correspond to $N$ variational symmetries
\begin{subequations}\label{liouville.C.X}
\begin{equation}
\X^\solnsp_{(C_i)} = P^j_{(C_i)} \partial_{q_j} + \Dt P^j_{(C_i)} \partial_{\dot{q}_j}
\end{equation}
where
\begin{equation} 
P^j_{(C_i)} = g^{-1}{}^{jk}\partial_{\dot{q}^k} C_i . 
\end{equation}
\end{subequations}
Their commutators are given by
\begin{equation}
  [\X^\solnsp_{(C_i)}, \X^\solnsp_{(C_j)}] =\X^\solnsp_{(\{C_j,C_i\})} =0
\end{equation}
from Theorem~\ref{thm:C1.C2.varsymm.PB},
since $\{C_j,C_i\} =0$. 
Hence, the set of these infinitesimal symmetries \eqref{liouville.C.X}
forms an $N$-dimensional abelian Lie algebra. 

Conversely, when a dynamical system possesses
$N$ commuting variational symmetries that are independent over the solution space, 
the Noether correspondence \eqref{PfromC.gen} yields $N$ locally conserved integrals.
These integrals will be functionally independent
by part (i) of Proposition~\ref{prop:count}.
Their Poisson brackets will be constants,
due to the homomorphism \eqref{C1.C2.varsymm.PB}
given in Theorem~\ref{thm:C1.C2.varsymm.PB}.
However, these brackets need not vanish,
which implies that existence of
commuting variational symmetries over the solution space
is insufficient for local Liouville integrability to hold.

From Theorem~\ref{thm:liouville},
the $N$ additional locally conserved integrals 
likewise correspond to $N$ variational symmetries.
Specifically,
in the case of an autonomous dynamical system,
the $N-1$ constants of motion $\Theta^i(q,\dot{q})$, $i=2,\ldots,N$,
yield 
\begin{subequations}\label{liouville.Theta.X}
\begin{equation}
  \X^\solnsp_{(\Theta^i)} = P^j_{(\Theta^i)} \partial_{q_j} + \Dt P^j_{(\Theta^i)} \partial_{\dot{q}_j},
  \quad
  i=2,\ldots,N
\end{equation}
where
\begin{equation}
P^j_{(\Theta^i)} = g^{-1}{}^{jk}\partial_{\dot{q}^k} \Theta^i
= g^{-1}{}^{jk}\partial_{\dot{q}^k}C_l  \parders{\mathcal{S}}{C_l}{C_i} . 
\end{equation}
\end{subequations}
The integral of motion $T$ similarly yields
\begin{subequations}\label{liouville.T.X}
\begin{equation}
\X^\solnsp_{(T)} = P^j_{(T)} \partial_{q_j} + \Dt P^j_{(T)} \partial_{\dot{q}_j}
\end{equation}
with 
\begin{equation}
P^j_{(T)} = -g^{-1}{}^{jk}\partial_{\dot{q}^k} \Theta^1
= -g^{-1}{}^{jk}\partial_{\dot{q}^k}C_l  \parders{\mathcal{S}}{C_l}{E}
\end{equation}
\end{subequations}
where $C_1=H=E$ is the energy. 
The commutators $[\X^\solnsp_{(\Theta^i)},\X^\solnsp_{(\Theta^j)}]$ 
and $[\X^\solnsp_{(\Theta^i)},\X^\solnsp_{(C^j)}]$ 
are given by variational symmetries that arise from the corresponding Poisson brackets. 
Note that these Poisson brackets may not close linearly,
for instance $\{\Theta^i,\Theta^j\}$ may be a nonlinear function of 
$\Theta^i$, $T$, $C_i$.
In such a situation,
the set of variational symmetries \eqref{liouville.Theta.X} and \eqref{liouville.T.X}
will not close under commutation and thereby does not form a Lie algebra.
However, the resulting variational symmetries arising from such commutators
will be dependent over the solution space,
as a consequence of part (iii) of Proposition~\ref{prop:count}.

A similar discussion applies to the conserved integrals in Theorem~\ref{thm:liouville}
in the case of non-autonomous dynamical systems. 

Whether or not a dynamical system is autonomous,
each of the $2N$ variational symmetries in Theorem~\ref{thm:liouville} 
generates a one-dimensional Lie group of symmetry transformations, 
which can be obtained by Corollary~\ref{cor:repr.X.transformation.group}.
The symmetry transformations that correspond to the $N$ conserved integrals $C_i$
will commute, giving a $N$-dimensional abelian Lie group,
while closure of the remaining symmetry transformations
depends on the Poisson bracket structure of the corresponding conserved integrals.

\begin{prop}\label{prop:closed.symmgroup}
For a dynamical system that is locally Liouville integrable,
the $2N$ variational symmetries in Theorem~\ref{thm:liouville}
will generate a Lie group of closed symmetry transformations
if and only if the Poisson brackets of the $2N$ conserved integrals close linearly.
\end{prop}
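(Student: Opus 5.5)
The argument reduces the group question to a Lie algebra question and then applies Theorem~\ref{thm:C1.C2.varsymm.PB} together with Corollary~\ref{cor:liealgebras}. Write the $2N$ locally conserved integrals from Theorem~\ref{thm:liouville} as $I_1,\ldots,I_{2N}$; these are $C_i$ together with $\Theta^i$, $T$, or $\Upsilon^i$, as the case may be. Their symmetries are $\X_a := \X^\solnsp_{(I_a)}$. By Lie's third theorem and the standard correspondence, the one-parameter groups $\exp(\varepsilon\X_a)$ compose to a finite-dimensional local Lie group of transformations on $\solnsp$ exactly when the real linear span $\mathfrak g=\mathrm{span}_\Rnum\{\X_a\}$ is closed under commutation. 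Here ``closed symmetry transformations'' means that compositions $\exp(\varepsilon_1\X_{a_1})\cdots\exp(\varepsilon_k\X_{a_k})$ can be re-expressed through the $2N$ parameters. Corollary~\ref{cor:repr.X.transformation.group} lets us use any gauge representative, so nothing depends on the choice of $\tau$.

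For sufficiency, suppose $\{I_a,I_b\}=c_{ab}{}^{e} I_e + k_{ab}$ with constant $c_{ab}{}^e$ and $k_{ab}$. Theorem~\ref{thm:C1.C2.varsymm.PB} gives $[\X_b,\X_a]=\X^\solnsp_{(\{I_a,I_b\})}$. The Noether map \eqref{PfromC.gen} is $\Rnum$-linear in $C$ and sends constants to zero, which is the kernel noted after Corollary~\ref{cor:liealgebras}. Hence $[\X_b,\X_a]=c_{ab}{}^e\X_e$, so $\mathfrak g$ is a Lie algebra of dimension at most $2N$, and exponentiation gives the closed local group. The Jacobi identity is automatic because it is inherited from the Poisson bracket.

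For necessity, suppose $\mathfrak g$ is closed, so $[\X_b,\X_a]=c_{ab}{}^e\X_e$ with constants $c_{ab}{}^e$. Then $\X^\solnsp_{(\{I_a,I_b\}-c_{ab}{}^eI_e)}=0$. By \eqref{PfromC.gen} and invertibility of $g$, this gives $\partial_{\dot q}(\{I_a,I_b\}-c_{ab}{}^eI_e)=0$. The function $D=\{I_a,I_b\}-c_{ab}{}^eI_e$ is itself locally conserved, since Poisson brackets of conserved integrals are conserved by the homomorphism and \eqref{local.conserved.PB}. So Theorem~\ref{thm:noether} applies: the relation \eqref{C.oth} with $P=0$ gives $D_t+\dot q^iD_{q^i}=0$ identically in $\dot q$, which forces $D_{q^i}=0$ and then $D_t=0$. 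Thus $D$ is constant, and the brackets close linearly modulo constants. This is the sense in which ``close linearly'' must be read, matching part (iv) of Proposition~\ref{prop:count}.

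The step I expect to be the main obstacle is making precise that closure of the group is equivalent to finite-dimensional closure of the real span $\mathfrak g$. The brackets may land in symmetries $k'(I)\X_{(I)}$ that are dependent over $\solnsp$ but not $\Rnum$-linearly dependent (part (iii) of Proposition~\ref{prop:count}). Such brackets generate an infinite-dimensional span, and the group generated is then not a finite-dimensional Lie group. I would handle this by invoking the converse of Lie's second theorem: a finite-dimensional local transformation group has a finite-dimensional algebra of generators containing all the $\X_a$. That algebra must then contain all iterated commutators, which are $\X^\solnsp$ of iterated brackets. Finite-dimensionality combined with the injectivity modulo constants established above then yields constant structure coefficients, and hence linear closure.
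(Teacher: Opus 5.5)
Your sufficiency argument and the necessity argument in your third paragraph follow the route the paper intends. The paper gives no separate proof; it relies on the homomorphism of Theorem~\ref{thm:C1.C2.varsymm.PB}, on the kernel being the constants (the remark after Corollary~\ref{cor:liealgebras}), on part (iv) of Proposition~\ref{prop:count}, and on Corollary~\ref{cor:repr.X.transformation.group}. Your check that $\X^\solnsp_{(D)}=0$ forces $D$ to be constant correctly fills in the kernel statement.

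The genuine gap is the identification in your first paragraph, which your last paragraph tries to justify. The one-parameter groups $\exp(\varepsilon\X_a)$ generate a finite-dimensional local group exactly when the Lie algebra $\mathfrak h$ generated by the $\X_a$ is finite-dimensional, where $\mathfrak h$ is the span together with all iterated commutators. That is not the same as $\mathfrak g=\mathrm{span}_\Rnum\{\X_a\}$ being closed, because $\mathfrak h$ can be finite-dimensional and strictly larger than $\mathfrak g$. So the inference ``finite-dimensionality plus injectivity modulo constants gives constant structure coefficients, hence linear closure'' fails. It only gives linear closure of the enlarged set consisting of the $I_a$ together with their iterated brackets, not of the $I_a$ themselves.

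Your argument uses nothing specific to Theorem~\ref{thm:liouville}, so a generic counterexample shows the step is invalid. Take $\Lagr=\tfrac12\dot q^2$, $I_1=\dot q$, and $I_2=\tfrac12x_0^2$ with $x_0=q-t\dot q$. Then $\{I_2,I_1\}=x_0$, which is not a linear combination of $I_1$, $I_2$ and constants. Yet $\X^\solnsp_{(I_1)}=\partial_q$ and $\X^\solnsp_{(I_2)}=x_0\X^\solnsp_{(x_0)}$, with $\X^\solnsp_{(x_0)}=-t\partial_q-\partial_{\dot q}$, satisfy $[\X^\solnsp_{(I_1)},\X^\solnsp_{(I_2)}]=\X^\solnsp_{(x_0)}$, and $\X^\solnsp_{(x_0)}$ commutes with both. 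This is a three-dimensional Heisenberg algebra, integrating to a closed three-parameter local group. Your side remark that non-closure of the span forces an infinite-dimensional algebra is therefore also false. Under the reading ``closed means generating some finite-dimensional Lie group'', the necessity direction cannot be obtained along your lines.

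The fix is to commit to the reading you state at the outset, which is also implicit in Proposition~\ref{prop:count}(iv): compositions stay inside a $2N$-parameter group. You then need one observation you did not state: the $\X_a$ are $\Rnum$-linearly independent. A relation $c^a\X_a=0$ gives $\X^\solnsp_{(c^aI_a)}=0$, so $c^aI_a$ is constant, contradicting functional independence. Hence a group with at most $2N$ parameters containing every $\exp(\varepsilon\X_a)$ has Lie algebra exactly $\mathfrak g$, so $\mathfrak g$ is closed with constant structure constants. Your third paragraph then completes the proof, and the final paragraph should be dropped.
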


A final remark is that if a dynamical system is autonomous
then the constant of motion $C_1=H=E$ representing energy
turns out to corresponds to a time-translation symmetry.
This well-known statement is derived as follows
from the variational symmetry
\begin{equation} 
\X^\solnsp_{(E)} = P^j_{(E)} \partial_{q_j} + \Dt P^j_{(E)} \partial_{\dot{q}_j}
\end{equation}
where $P^j_{(E)} = g^{-1}{}^{jk}\partial_{\dot{q}^k} E$ is given in terms of $E$. 
Applying the chain rule to the derivative in this latter expression yields
\begin{equation} 
  \partial_{\dot{q}^k} E
  = \partial_{\dot{q}^k} p_l \partial_{p_l} E
  = g_{kl} \dot{q}^l
\end{equation}
by use of the expression \eqref{Ham.p} for the Hamiltonian momenta
combined with Hamilton's equations \eqref{Ham.eom.gen}. 
This gives
\begin{equation}
  P^j_{(E)} = \dot{q}^j , 
\end{equation}
and thus
\begin{equation} 
\X^\solnsp_{(E)} = \dot{q}^j  \partial_{q_j} + f^j\partial_{\dot{q}_j}
\end{equation}
using the equations of motion \eqref{eom.gen}.
Now consider the associated vector field
\begin{equation}
  \Y_{(E)} = \X^\solnsp_{(E)} + \tau \Dt 
    = \tau\partial_{t} +(\tau +1)\dot{q}^i \partial_{q^i} + (\tau +1)f^i\partial_{\dot{q}^i}
\end{equation}
and choose the gauge function $\tau=-1$.
This yields 
\begin{equation}\label{E.Y}
  \Y_{(E)} = - \partial_{t} , 
\end{equation}
which is an infinitesimal time-translation point symmetry.
It generates the Lie group of symmetry transformations
$t^\dagger = t - \varepsilon$,
with parameter $\varepsilon\in\Rnum$,
as seen by Corollary~\ref{cor:repr.X.transformation.group}.

\section{Concluding remarks}\label{sec:conclude}

In the present work, a hybrid framework has been developed that highlights and unifies
the most important aspects of the Noether correspondence between
symmetries and conserved integrals in Lagrangian and Hamiltonian mechanics.
The results enable, in particular,
finding the complete Noether symmetry group of a dynamical system 
in an explicit form. 
This will be illustrated in a sequel paper \cite{Anc2026}
that will discuss the dynamical symmetries
for several examples of dynamical systems which satisfy local Liouville integrability.

\section*{Acknowledgements}

S.C.A.\ is supported by an NSERC research grant.

\appendix
\section{Proof of Theorem~\ref{thm:C1.C2.varsymm.PB}}

Because infinitesimal variational symmetries form a Lie algebra under commutation, 
the left-hand side of equation \eqref{C1.C2.varsymm.PB} is a vertical vector field
$P^i_\text{l.h.s.}\partial_{q^i} + \Dt P^i_\text{l.h.s.}\partial_{\dot{q}^i}$ where
\begin{equation}\label{P.lhs}
  P^i_\text{l.h.s.} =
  \X_{(C_2)}^\solnsp\rfloor d P_1^i   - \X_{(C_1)}^\solnsp\rfloor d P_2^i 
\end{equation}
with $P_1^i = g^{-1}{}^{ij} \dfrac{\partial C_1}{\partial \dot{q}^j}$
and $P_2^i = g^{-1}{}^{ij} \dfrac{\partial C_2}{\partial \dot{q}^j}$. 
Using Theorem~\ref{thm:X.PB.C.F}, 
the first term in expression \eqref{P.lhs} is given by 
\begin{equation}
  \begin{aligned}
  \X_{(C_2)}^\solnsp\rfloor d P_1^i 
  = \{ P_1^i, C_2 \}
  & = g^{-1}{}^{kl}\Big( \parder{P_1^i}{q^k} \parder{C_2}{\dot{q}^l}
  -\parder{C_2}{q^k}\parder{P_1^i}{\dot{q}^l} \Big)
  +  c^{kl}  \parder{P_1^i}{\dot{q}^k}\parder{C_2}{\dot{q}^l}  \\
  & =  P_2^k  \parder{P_1^i}{q^k}
  - g^{-1}{}^{kl} \parder{C_2}{q^k}\parder{P_1^i}{\dot{q}^l}
  + c^{kj} g_{jl} P_2^l   \parder{P_1^i}{\dot{q}^k} 
  \end{aligned}
\end{equation}
after using equation \eqref{h.matrix}. 
The second term in expression \eqref{P.lhs} has a similar expansion,
and thus 
\begin{equation}\label{P.lhs.simp}
  P^i_\text{l.h.s.} = 
  P_2^k  \parder{P_1^i}{q^k} - P_1^k  \parder{P_2^i}{q^k} 
  + g^{-1}{}^{kl} \Big( \parder{C_1}{q^k}\parder{P_2^i}{\dot{q}^l}
  - \parder{C_2}{q^k}\parder{P_1^i}{\dot{q}^l} \Big)
  + c^{kj} g_{jl}\Big( P_2^l   \parder{P_1^i}{\dot{q}^k} - P_1^l   \parder{P_2^i}{\dot{q}^k} \Big) . 
\end{equation}

The right-hand side of equation \eqref{C1.C2.varsymm.PB} is a vertical vector field
$P^i_\text{r.h.s.}\partial_{q^i} + \Dt P^i_\text{r.h.s.}\partial_{\dot{q}^i}$
where, from the Poisson bracket \eqref{Lagr.PB},
\begin{equation}\label{P.rhs}
  \begin{aligned}
  P^i_\text{r.h.s.} 
  & = g^{-1}{}^{ij} \parder{}{\dot{q}^j} \Big( g^{-1}{}^{kl}
  \Big( \parder{C_1}{q^k}\parder{C_2}{\dot{q}^l} -\parder{C_2}{q^k}\parder{C_1}{\dot{q}^l} \Big)
  +  c^{kl} \parder{C_1}{\dot{q}^k}\parder{C_2}{\dot{q}^l} 
  \Big) \\
  & = g^{-1}{}^{ij} \parder{}{\dot{q}^j} \Big( P_2^k \parder{C_1}{q^k} - P_1^k \parder{C_2}{q^k}
  + (h_{kl}-h_{lk}) P_1^k P_2^l \Big)
  \end{aligned}
\end{equation}
with use of equation \eqref{h.matrix}. 
Expanding the first two terms in this expression produces
\begin{equation}
g^{-1}{}^{ij} \parder{}{\dot{q}^j} \Big( P_2^k \parder{C_1}{q^k} - P_1^k \parder{C_2}{q^k} \Big) 
  =   g^{-1}{}^{ij} \Big( \parder{P_2^k}{\dot{q}^j} \parder{C_1}{q^k}
  - \parder{P_2^k}{\dot{q}^j} \parder{C_1}{q^k} \Big)
  + g^{-1}{}^{ij} \Big( P_2^k \parders{C_1}{\dot{q}^j}{q^k}  - P_1^k \parders{C_2}{\dot{q}^j}{q^k} \Big) 
\end{equation}
where the last term in parentheses can be expressed through integration by parts as 
\begin{equation}
  \begin{aligned}
   g^{-1}{}^{ij} \Big( P_2^k \parders{C_1}{\dot{q}^j}{q^k}  - P_1^k \parders{C_2}{\dot{q}^j}{q^k} \Big)
   & = P_2^k \parder{P_1^i}{q^k}  - P_1^k \parder{P_2^i}{q^k}
   - \parder{g^{-1}{}^{ij}}{q^k}  \Big( P_2^k \parder{C_1}{\dot{q}^j}  - P_1^k \parder{C_2}{\dot{q}^j} \Big) \\
   & = P_2^k \parder{P_1^i}{q^k}  - P_1^k \parder{P_2^i}{q^k}
   + g^{-1}{}^{il} \parder{g_{lj}}{q^k}  \Big( P_2^k P_1^j  - P_1^k P_2^j \Big) . 
  \end{aligned}
\end{equation}
Next, the second two terms in expression \eqref{P.rhs}
can be expanding and rearranged,
giving 
\begin{equation}
  g^{-1}{}^{ij} \parder{}{\dot{q}^j} \Big(  (h_{kl}-h_{lk}) P_1^k P_2^l \Big) 
  = (h_{kl}-h_{lk}) g^{-1}{}^{ij} \parder{}{\dot{q}^j} \Big( P_1^k P_2^l \Big)
  + g^{-1}{}^{ij} \parder{h_{kl}}{\dot{q}^j} \big( P_1^k P_2^l - P_2^k P_1^l \big) . 
\end{equation}
This yields 
\begin{equation}
  \begin{aligned}
  P^i_\text{r.h.s.}
   & =  g^{-1}{}^{ij} \Big( \parder{P_2^k}{\dot{q}^j} \parder{C_1}{q^k}
  - \parder{P_2^k}{\dot{q}^j} \parder{C_1}{q^k} \Big) 
  + P_2^k \parder{P_1^i}{q^k}  - P_1^k \parder{P_2^i}{q^k}
  \\&\qquad
  +  (h_{kl}-h_{lk}) g^{-1}{}^{ij} \parder{}{\dot{q}^j} \Big( P_1^k P_2^l \Big)
  + g^{-1}{}^{ij} \Big( \parder{h_{kl}}{\dot{q}^j}   -\parder{g_{jl}}{q^k}  \Big)
  \big( P_1^k P_2^l - P_2^k P_1^l \big) . 
  \end{aligned}
\end{equation}
The last term vanishes due to the first commutativity relation in equation \eqref{commutativity.rels}, 
and thus
\begin{equation}\label{P.rhs.simp}
  P^i_\text{r.h.s.}
   =  g^{-1}{}^{ij} \Big( \parder{P_2^k}{\dot{q}^j} \parder{C_1}{q^k}
  - \parder{P_2^k}{\dot{q}^j} \parder{C_1}{q^k} \Big) 
  + P_2^k \parder{P_1^i}{q^k}  - P_1^k \parder{P_2^i}{q^k}
  +  (h_{kl}-h_{lk}) g^{-1}{}^{ij} \parder{}{\dot{q}^j} \Big( P_1^k P_2^l \Big) . 
\end{equation}

Now, subtract expressions \eqref{P.rhs.simp} and \eqref{P.lhs.simp}: 
\begin{equation}\label{P.lhs.rhs}
  \begin{aligned}
  P^i_\text{l.h.s.} -    P^i_\text{r.h.s.}
  & = \parder{C_1}{q^k}\Big( g^{-1}{}^{kl} \parder{P_2^i}{\dot{q}^l}
  - g^{-1}{}^{ij} \parder{P_2^k}{\dot{q}^j} \Big)
   - \parder{C_2}{q^k}\Big( g^{-1}{}^{kl} \parder{P_1^i}{\dot{q}^l}
   - g^{-1}{}^{ij} \parder{P_1^k}{\dot{q}^j} \Big)
   \\&\qquad
    - (h_{kl}-h_{lk}) g^{-1}{}^{ij} \parder{}{\dot{q}^j}\Big( P_1^k P_2^l \Big)
   + c^{kj} g_{jl} 
   \Big( P_2^l   \parder{P_1^i}{\dot{q}^k} - P_1^l   \parder{P_2^i}{\dot{q}^k} \Big) . 
    \end{aligned}
\end{equation}
The first term with parenthesis in this expression
can be expanded and then simplified,
which yields 
\begin{equation}
  \begin{aligned}
    g^{-1}{}^{kl} \parder{P_2^i}{\dot{q}^l}
    - g^{-1}{}^{ij} \parder{P_2^k}{\dot{q}^j}
   & = g^{-1}{}^{kl} \parder{g^{-1}{}^{ij}}{\dot{q}^l} \parder{C_2}{\dot{q}^j} 
    - g^{-1}{}^{ij} \parder{g^{-1}{}^{kl}}{\dot{q}^j} \parder{C_2}{\dot{q}^l} \\
    & = - g^{-1}{}^{kl} g^{-1}{}^{im} \parder{g_{mj}}{\dot{q}^l} P_2^j
    + g^{-1}{}^{ij} g^{-1}{}^{km} \parder{g_{ml}}{\dot{q}^j} P_2^l \\
    & = g^{-1}{}^{ij} g^{-1}{}^{km} P_2^l
    \Big( \parder{g_{ml}}{\dot{q}^j} - \parder{g_{jl}}{\dot{q}^m} \Big)
  =0
  \end{aligned}
\end{equation}
due to the second commutativity relation in equation \eqref{commutativity.rels}.
Likewise, the second term with parenthesis in expression \eqref{P.lhs.rhs} vanishes,
\begin{equation}
g^{-1}{}^{kl} \parder{P_1^i}{\dot{q}^l}
- g^{-1}{}^{ij} \parder{P_1^k}{\dot{q}^j}
= 0 . 
\end{equation}
Expanding the third term in expression \eqref{P.lhs.rhs} yields
\begin{equation}
  \begin{aligned}
  -g^{-1}{}^{ij} (h_{kl}-h_{lk}) \parder{}{\dot{q}^j}\Big( P_1^k P_2^l \Big)
  & = -g^{-1}{}^{ij} (h_{kl}-h_{lk}) \Big(
  \parder{}{\dot{q}^j}\Big( g^{-1}{}^{km} \parder{C_1}{\dot{q}^m} \Big) P_2^l
  + \parder{}{\dot{q}^j}\Big( g^{-1}{}^{lm} \parder{C_2}{\dot{q}^m}\Big) P_1^k
  \Big) \\
    & = -g^{-1}{}^{ij} (h_{kl}-h_{lk})
  \Big( g^{-1}{}^{km} \parders{C_1}{\dot{q}^m}{\dot{q}^j} P_2^l
  + g^{-1}{}^{lm} \parders{C_2}{\dot{q}^m}{\dot{q}^j} P_1^k \\ &\qquad
  +\parder{g^{-1}{}^{km}}{\dot{q}^j} \parder{C_1}{\dot{q}^m} P_2^l
 + \parder{g^{-1}{}^{lm}}{\dot{q}^j} \parder{C_2}{\dot{q}^m} P_1^k
 \Big) . 
  \end{aligned}
\end{equation}
By means of integration by parts, the first two terms can be expressed as 
\begin{equation}
  \begin{aligned}
    - (h_{kl}-h_{lk}) \Big(
    g^{-1}{}^{km} \parder{P_1^i}{\dot{q}^m} P_2^l
  + g^{-1}{}^{lm} \parder{P_2^i}{\dot{q}^m} P_1^k 
  - g^{-1}{}^{km}   \parder{g^{-1}{}^{ij}}{\dot{q}^m} \parder{C_1}{\dot{q}^j} P_2^l
  - g^{-1}{}^{lm} \parder{g^{-1}{}^{ij}}{\dot{q}^m} \parder{C_2}{\dot{q}^j} P_1^k
  \Big) , 
  \end{aligned}
\end{equation}
which combines with the last two terms,
yielding
\begin{equation}\label{3rdterm}
  \begin{aligned}
    -g^{-1}{}^{ij} (h_{kl}-h_{lk}) \parder{}{\dot{q}^j}\Big( P_1^k P_2^l \Big)
    & = - (h_{kl}-h_{lk}) \bigg(
    g^{-1}{}^{km} \parder{P_1^i}{\dot{q}^m} P_2^l
    + g^{-1}{}^{lm} \parder{P_2^i}{\dot{q}^m} P_1^k \\&\qquad
    +\Big( g^{-1}{}^{ij} \parder{g^{-1}{}^{km}}{\dot{q}^j} \parder{C_1}{\dot{q}^m}
    - g^{-1}{}^{km} \parder{g^{-1}{}^{ij}}{\dot{q}^m} \parder{C_1}{\dot{q}^j}
    \Big) P_2^l \\&\qquad
    + \Big( g^{-1}{}^{ij} \parder{g^{-1}{}^{lm}}{\dot{q}^j} \parder{C_2}{\dot{q}^m}
    - g^{-1}{}^{lm} \parder{g^{-1}{}^{ij}}{\dot{q}^m} \parder{C_2}{\dot{q}^j}
    \Big) P_1^k
    \bigg) \\
    & = -c^{mk} g_{kl}   
    \Big( \parder{P_1^i}{\dot{q}^m} P_2^l - \parder{P_2^i}{\dot{q}^m} P_1^l \Big) \\&\qquad
    + c^{nk} g_{kl} g^{-1}{}^{ij} \parder{g_{nm}}{\dot{q}^j} \big( P_1^m P_2^l - P_2^m P_1^l \big)
   \\&\qquad\qquad
    -c^{mk} g_{kl} g^{-1}{}^{in} 
    \parder{g_{nj}}{\dot{q}^m} \big( P_1^j P_2^l - P_2^j P_1^l \big)
    \end{aligned}
\end{equation}
where the final equality is obtained using equation \eqref{h.matrix} along with antisymmetry of $h_{kl}-h_{lk}$. 
The first term on the right-hand side of expression \eqref{3rdterm}
cancels the fourth term in expression \eqref{P.lhs.rhs}.
This leaves only the terms in the last two lines in expression \eqref{3rdterm}. 
These terms can be combined after rearrangement of indices, yielding
\begin{equation}
  \begin{aligned}
  & c^{nk} g^{-1}{}^{ij} \Big( g_{kl}  \parder{g_{nm}}{\dot{q}^j} 
 - g_{km} \parder{g_{nl}}{\dot{q}^j}  \Big) P_1^m P_2^l
 -c^{mk}  g^{-1}{}^{in} \Big( g_{kl} \parder{g_{nj}}{\dot{q}^m}
 - g_{kj} \parder{g_{nl}}{\dot{q}^m} \Big)  P_1^j P_2^l
  \\&
 =   c^{nk} g^{-1}{}^{ij} \Big(
 g_{kl}  \parder{g_{nm}}{\dot{q}^j}  - g_{kl} \parder{g_{jm}}{\dot{q}^n}
 +  g_{km} \parder{g_{jl}}{\dot{q}^n} - g_{km} \parder{g_{nl}}{\dot{q}^j}
 \Big) P_1^m P_2^l
 =0
 \end{aligned}
\end{equation}
where the terms cancel in pairs 
due to the second commutativity relation in equation \eqref{commutativity.rels}.

Consequently, all terms have cancelled in the difference expression \eqref{P.lhs.rhs},
which establishes that $P^i_\text{l.h.s.} - P^i_\text{r.h.s.} =0$. 
This completes the proof.


\begin{thebibliography}{99}

\bibitem{WhiWat-book}
E.T. Whittaker, G.N. Watson,
{\it A Course of Modern Analysis},
Cambridge University Press, 1915.

\bibitem{Arn-book}
V.I. Arnold,
{\it Mathematical Methods of Classical Mechanics},
Graduate Texts in Mathematics Vol. 50 (2nd ed.), Springer, 1989.

\bibitem{BA-book}
G. Bluman and S.C. Anco,
{\it Symmetry and Integration Methods for Differential Equations},
Applied Math. Sci. Volume 154
(Springer, New York) 2002.

\bibitem{Mil.Pos.Win}
W. Miller, S. Post, P. Winternitz,
Classical and quantum superintegrability with applications,
J. Phys. A 46 (2013), 423001.

\bibitem{Fra}
D.M. Fradkin,
 Existence of the dynamic symmetries $O_4$ and $SU(3)$ for all classical central potential problems, 
Prog. Theor. Phys. 37 (1967), 798--812.

\bibitem{Per}
A. Peres,
Generalisation of the Runge-Lenz constant of classical motion in a central potential,
J. Phys. A: Math. Gen. 12 (1979), 1711--1713.

\bibitem{GolPooSaf}
H. Goldstein, C. Poole, J. Safko, 
{\it Classical Mechanics} (3rd ed.), (Addison Wesley) 2000.

\bibitem{AncMeaPas}
S.C. Anco, T. Meadows, V. Pascuzzi, 
Some new aspects of first integrals and symmetries for central force dynamics, 
J. Math. Phys. 57 (2016), 062901 (35 pages).

\bibitem{AncBalGan}
S.C. Anco, A. Ballesteros, M. Gandarias, 
Global versus local (super)integrability of a nonlinear oscillator, 
Phys. Lett. A. 383 (2019), 801--807. 

\bibitem{Olv-book}
P.J. Olver,
{\it Applications of Lie Groups to Differential Equations},
(Springer, New York) 1986.

\bibitem{BCA-book}
G. Bluman, A.F. Cheviakov, and S.C. Anco, 
{\it Applications of Symmetry Methods to Partial Differential Equations}, 
Applied Mathematical Sciences series, Volume 168, Springer (2009). 

\bibitem{Noe}
E. Noether,
Invariante Variationsprobleme,
Nachrichten von der Gesellschaft der Wissenschaften zu G\"ottingen. Mathematisch-Physikalische Klasse (1918), 235-–257.

\bibitem{Anc-review}
S.C. Anco, 
Generalization of Noether's theorem in modern form to non-variational partial differential equations. 
In: Recent progress and Modern Challenges in Applied Mathematics, Modeling and Computational Science, 119--182, 
Fields Institute Communications, Volume 79, Springer (2017).

\bibitem{AncBao}
S.C. Anco, W. Bao,
A formula for symmetry recursion operators from non-variational symmetries of partial differential equations, 
Lett. Math. Phys. 111 (2021), 70 (33 pages).

\bibitem{Bou.Bou}
S. Bouquet and A. Bourdier,
Notion of integrability for time-dependent Hamiltonian systems:
Illustrations from the relativistic motion of a charged particle,
Phys. Rev. E 57(2) (1998), 1273--1283.

\bibitem{Ley}
F. Leyvraz, 
Liouville integrability may not be what you think, 
Am. J. Phys. 93 2025), 320--327.

\bibitem{Anc2026}
S.C. Anco,
Noether symmetry groups, locally conserved integrals, and dynamical symmetries in classical mechanics.
arXiv: 2603.26624 







\end{thebibliography}
\end{document}